
\documentclass[pdflatex,sn-nature]{sn-jnl}


\usepackage{graphicx}%
\usepackage{multirow}%
\usepackage{amsmath,amssymb,amsfonts}%
\usepackage{amsthm}%
\usepackage{mathrsfs}%
\usepackage[title]{appendix}%
\usepackage{xcolor}%
\usepackage{textcomp}%
\usepackage{manyfoot}%
\usepackage{booktabs}%
\usepackage{algorithm}%
\usepackage{algorithmicx}%
\usepackage{algpseudocode}%
\usepackage{listings}%
\usepackage{physics}
\usepackage{bibunits}
\defaultbibliographystyle{unsrt}
\defaultbibliography{reference}



\theoremstyle{thmstyleone}%
\newtheorem{theorem}{Theorem}
%

\theoremstyle{thmstyletwo}%

\theoremstyle{thmstylethree}%

\raggedbottom
\begin{document}
\newgeometry{left=1.25in, right=1.25in, top=0.9in, bottom=1in}
\title{Powers of Magnetic Graph Matrix: Fourier Spectrum, Walk Compression, and Applications}


\author[1]{Yinan Huang}

\author[2]{David F. Gleich}

\author*[1]{Pan Li}\email{panli@gatech.edu}

\affil[1]{School of Electrical and Computer Engineering, Georgia Institute of Technology, Atlanta, Georgia, USA}

\affil[2]{Department of Computer Science, Purdue University, West Lafayette, Indiana, USA}


\maketitle

\begin{center}
\begin{minipage}{0.8\textwidth}
\begin{center}
\subsubsection*{Abstract}
\end{center}
\vspace{0.5em}

Magnetic graphs, originally developed to model quantum systems under magnetic fields, have recently emerged as a powerful framework for analyzing complex directed networks. Existing research has primarily used the spectral properties of the magnetic graph matrix to study global and stationary network features. However, their capacity to model local, non-equilibrium behaviors, often described by matrix powers, remains largely unexplored. We present a novel combinatorial interpretation of the magnetic graph matrix powers through directed walk profiles---counts of graph walks indexed by the number of edge reversals. Crucially, we establish that walk profiles correspond to a Fourier transform of magnetic matrix powers. The connection allows exact reconstruction of walk profiles from magnetic matrix powers at multiple discrete potentials, and more importantly, an even smaller number of potentials often suffices for accurate approximate reconstruction in real networks. This shows the empirical compressibility of the information captured by the magnetic matrix. This fresh perspective suggests new applications; for example, we illustrate how powers of the magnetic matrix can identify frustrated directed cycles (e.g., feedforward loops) and can be effectively employed for link prediction by encoding local structural details in directed graphs.

\end{minipage}
\vspace{2em}
\end{center}


\begin{bibunit}

Magnetic graphs are complex-weighted graphs in which each edge is assigned a complex phase. Originally developed to model the effects of external magnetic fields on electrons in lattice systems~\cite{harper1955single, shubin1994discrete, pankrashkin2006spectra, colin2013magnetic, berkolaiko2013nodal, razzoli2020continuous}, magnetic graphs have recently proven to be powerful tools for analyzing complex systems with asymmetric interactions~\cite{guo2017hermitian, mohar2020new, li2022hermitian, bottcher2024complex}. These graphs encode edge directionality through complex phases in their matrix representations, such as adjacency or Laplacian matrix. For instance, the spectrum of magnetic graph Laplacians has been used to reveal community structures in directed cycles~\cite{fanuel2017magnetic} and to enhance visualization of directed networks~\cite{fanuel2018magnetic, f2020characterization}. More recently, magnetic graphs have inspired novel signal processing and machine learning models tailored to directed graphs~\cite{zhang2021magnet, geisler2023transformers, huang2024good}. Additionally, magnetic graph Laplacians can be interpreted as connection graph Laplacians associated with the SO(2) group~\cite{cloninger2024random}, enabling further applications in angular synchronization~\cite{bandeira2013cheeger, singer2011angular} and high-dimensional data analysis via diffusion maps~\cite{singer2012vector}.

Previous studies on the magnetic graph matrix have primarily focused on how their eigenvalues and eigenvectors encode the topological properties of directed graphs~\cite{fanuel2017magnetic, fanuel2018magnetic, f2020characterization, lange2015frustration}. Such spectral information is widely recognized for capturing global and stationary characteristics of graphs~\cite{chung1997spectral}. In contrast, modeling local or non-equilibrium behaviors often requires the analysis of random walks, quantum walks, or diffusion processes~\cite{lovasz1993random, chung1997spectral, venegas2012quantum}, which are often characterized by the powers of the graph matrix. For undirected graphs, these processes have been extensively studied and proven effective for detecting local communities and node clusters~\cite{pons2005computing, rosvall2008maps, mucha2010community, lambiotte2015random}, quantifying node centrality (e.g., PageRank~\cite{page1999pagerank, haveliwala2002topic}, Katz index~\cite{katz1953new}) and measuring node proximity~\cite{brand2005random, yildirim2008random}. However, their counterparts on magnetic graphs remain relatively underexplored. This stems not only from the complex-valued nature of the magnetic matrix, whose powers resist straightforward application of standard Markov process theory~\cite{fanuel2017magnetic}, but also critically, from the lack of a clear combinatorial interpretation that links the matrix power to the underlying directed network topology.

    In this article, we investigate the powers of the magnetic graph matrix to uncover properties of local network behaviors encoded in directed graph topologies. Our primary finding is that the $m$-th power of the magnetic graph adjacency matrix with magnetic potential $q$ naturally corresponds to the Fourier transform of a bidirectional walk statistic, which we term the \emph{Walk Profile}. Specifically, the $m$-step walk profile consists of bidirectional walk counts of length $m$, indexed by the number of edge reversals. These ideas also give probabilistic interpretations when using the normalized adjacency matrix, in which case, these walk counts become the corresponding bidirectional random walk landing probabilities. This result is significant as it reveals a fundamental connection between the behaviors on magnetic graphs captured by powers of the magnetic matrix and the combinatorial structure of the underlying directed graphs, described by walk profiles.  It also offers a novel physical interpretation of the magnetic potential $q$ as the Fourier transform frequency. Importantly, this connection implies that by evaluating magnetic matrix powers at a finite set of discrete potentials, one can reconstruct the entire $m$-step bidirectional walk profiles via inverse Fourier transforms, thereby enabling the expression of magnetic matrix powers at arbitrary values of potential $q$.

In theory, to exactly reconstruct the $m$-step walk profiles, evaluating magnetic matrix powers at $\lfloor m/2 \rfloor+1$ distinct frequencies $q$ is required. However, for most random and real-world graphs, we demonstrate that significantly fewer frequencies often provide an accurate approximate reconstruction, as the energy of the walk profiles predominantly concentrates in the low-frequency region. The accuracy of this compression depends on the spectral radius of the magnetic graph matrix, which is related to the smallest eigenvalue of the magnetic graph Laplacian and its associated frustration index~\cite{lange2015frustration}. This relationship enables us to characterize directed graphs whose walk profiles are either easily compressible or resistant to compression, by analyzing the eigenvalues of their corresponding magnetic graph matrix.

Regarding practical applications, we further explore the feasibility of capturing diverse directed motifs using magnetic matrix powers with a varying number of potentials. In particular, we demonstrate that these representations can detect \emph{minimally frustrated directed cycles}, i.e., directed cycles perturbed by a single reversed edge, which extend beyond purely directed cycles emphasized in previous spectral analyses of magnetic Laplacians~\cite{fanuel2017magnetic}. The minimally frustrated directed cycles include feedforward loops, a prominent three-node motif prevalent in control systems and biological networks~\cite{mangan2003structure}. Finally, we employ magnetic matrix powers as structural features for directed graphs in link prediction tasks, aimed at inferring missing directed edges in incomplete networks. The strong performance of these features highlights their effectiveness in encoding local structural information of directed graphs.

\vspace{8pt}
\subsubsection*{Magnetic graph matrix and walk profiles}
\begin{figure}[t!]
    \centering
    \includegraphics[width=1\linewidth]{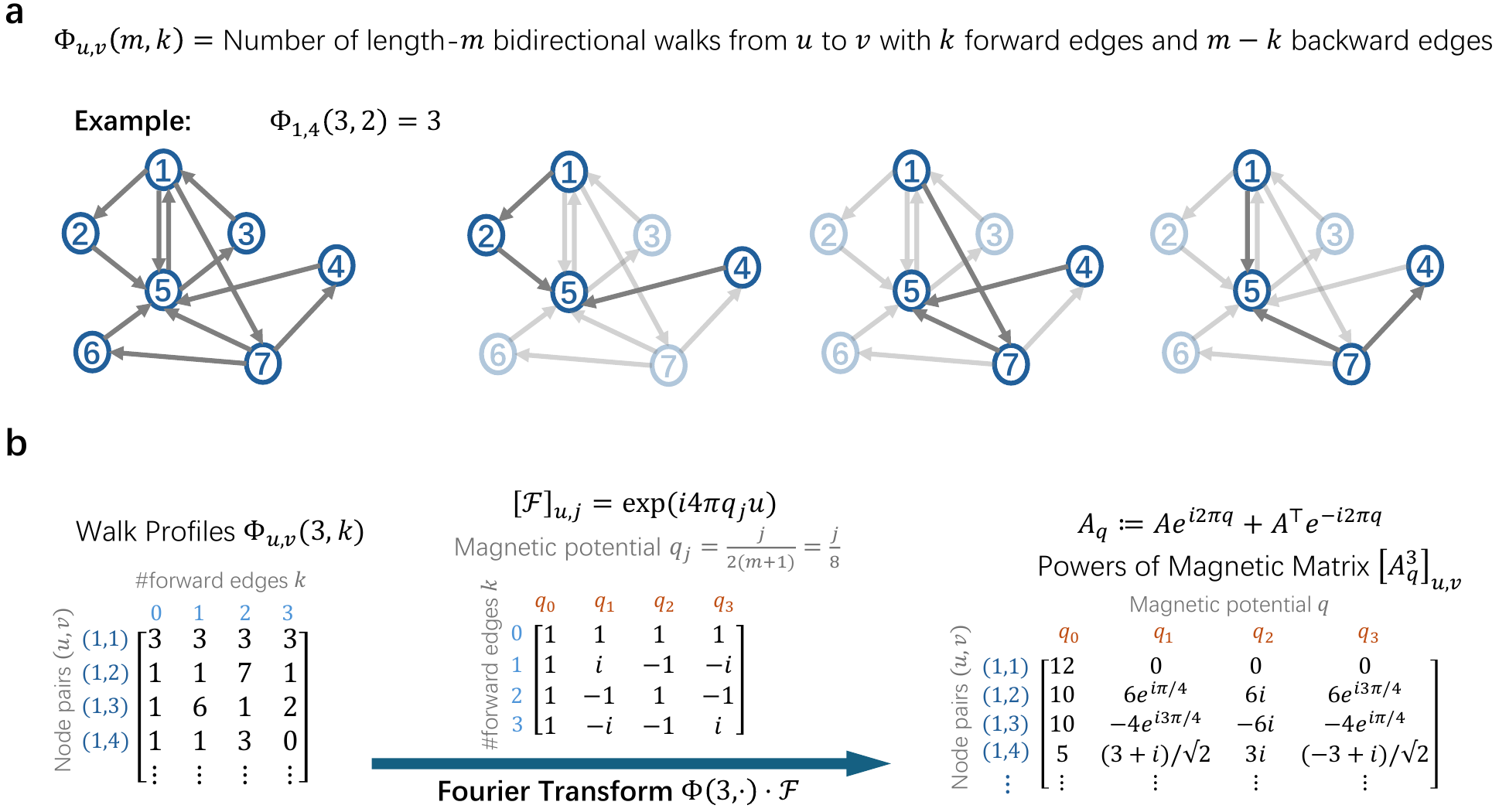}
    \caption{\textbf{Illustrations of walk profiles and its Fourier transforms.} \textbf{a}, An example of computing walk profiles $\Phi_{1,4}(3,2)$, the number of $3$-step bidirectional walks from node $1$ to node $4$ with $2$ forward edges and $1$ backward edges. The graph is given in the left figure and the three repetitions show the three different paths counted in $\Phi_{1,4}(3,2)$.  \textbf{b}, The collection of walk profiles $\Phi_{u,v}(3,k), k=0,...,3$ on the same graph, and how its Fourier transform relates to the powers of the magnetic adjacency matrix.}
    \label{fig:intro-figure}
\end{figure}

We first define \textbf{Walk Profiles} that captures bidirectional walk patterns of a directed graph, and we will then show how powers of the magnetic matrix are inherently connected to walk profiles (see Figure~\ref{fig:intro-figure}). 
Let $\mathcal{G}$ be a directed graph with asymmetric adjacency matrix $A$, where $A_{u,v}=1$ if $u$ connects to $v$, denoted by $u\to v$ and otherwise $A_{u,v}=0$. If $\mathcal{G}$ is a weighted directed graph, then $A_{u,v}$ is real-valued and represents the edge weight. 
A bidirectional walk refers to a sequence of nodes $u_1,u_2,...,u_m$, where each transition step can be either be forward edge $u_{i}\to u_{i+1}$ or a backward edge $u_{i}\leftarrow  u_{i+1}$. Walk profiles, denoted by $\Phi_{u,v}(m,k)$, are defined by the number of $m$-step bidirectional walks from node $u$ to node $v$, with exact $k$ forward edges and $m-k$ backward edges. An example is illustrated in Figure~\ref{fig:intro-figure}(a). These can be computed via a general iterative matrix expression (see eq.~\eqref{eq:wp_recursive} in the supplement), for instance, $\Phi_{u,v}(2,2)=[A^2]_{u,v}, \Phi_{u,v}(2,1)=[AA^{\top}+A^{\top}A]_{u,v}, \Phi_{u,v}(2,0)=[(A^{\top})^2]_{u,v}$. 

The directed graph $\mathcal{G}$ induces a magnetic graph with potential $q$, defined via a complex-valued, Hermitian adjacency matrix $A_q=Ae^{i2\pi q}+A^{\top }e^{-i2\pi q}$ called magnetic adjacency matrix~\cite{mohar2020new}. A forward edge $u\rightarrow v$ induces a positive-phase edge weight $[A_q]_{u,v}=e^{i2\pi q}$ and a backward edge $u\leftarrow v$ induces a negative-phase edge weight $[A_q]_{u,v}=e^{-i2\pi q}$, so that the phases encode the direction of edges in $\mathcal{G}$. 
We focus on the powers of magnetic adjacency matrix $A_q^m$ that naturally characterizes how information propagates on the magnetic graph. 



One of our key findings is that by expanding the powers of matrix power $[A_q^m]_{u,v}$  we can rewrite it as a Fourier transform of walk profiles $[\Phi_{u,v}(m, m),...,\Phi_{u,v}(m,0)]$ at frequency $2q$: 
\begin{equation}
    [A_q^m]_{u,v}=[(Ae^{i2\pi q}+A^{\top}e^{-i2\pi q})^m]_{u,v}=e^{-i2\pi qm}\sum_{k=0}^me^{i4\pi qk}\Phi_{u,v}(m,k).
    \label{eq:ft}
\end{equation}
It is convenient to think walk profiles in a vector form: $\Phi(m,\cdot)=(\Phi(m,m), \Phi(m,m-1),...,\Phi(m,0))$. We can rewrite equation \eqref{eq:ft} as ${A}_q^m=e^{-i2\pi qm}\cdot\mathcal{F}_{2 q}(\Phi(m,\cdot))$, i.e., $A^m_q$ is the discrete Fourier transform of a signal $\Phi(m,\cdot)$ at ordinary frequency $2q$. In this sense, $A_q^m$ is a projection of walk profiles $\Phi(m,\cdot)$ into the frequency subspace of frequency $2q$. For example if $q=0$, $A_{q=0}^m=(A+A^{\top})^m$ is the simple sum $\sum_{k=0}^m\Phi(m,k)$, which accounts for walks in all possible directions. See Supplemental section~\ref{supp:fourier-result} for the full details. 



\subsubsection*{Magnetic graph matrix with multiple potentials}
The Fourier transform interpretation implies that $A_{q}^m$ only carries partial information about $\Phi(m,\cdot)$ and it suggests that we consider a full set of potentials $\vec{q}=(q_0,...,q_{Q-1})$ so that $A^m_{\vec{q}}=(A^m_{q_0},...,A^m_{q_{Q-1}})$ can express full information in $\Phi(m,\cdot)$. This is important because once $\Phi(m,\cdot)$ is reconstructed from $A_{\vec{q}}^m$, then $A_q^{m}$ for arbitrary $q$ can be calculated by equation \eqref{eq:ft}. In other words, there exists a finite set of potentials $\vec{q}$ that capture all possible information magnetic adjacency matrix can express. Note that equation (1) with $Q$ many potentials $\vec{q}$ forms a $Q\times (m+1)$ linear system:
\begin{equation}
    \begin{pmatrix}
       e^{i2\pi q_0m}\cdot A^m_{q_0}\\ e^{i2\pi q_1m}\cdot A_{q_1}^m\\\cdots\\ e^{i2\pi q_{Q-1}m}\cdot A_{q_{Q-1}}^m 
    \end{pmatrix}=
    F_{\vec{q}, m+1}\begin{pmatrix}
        \Phi(m,m)\\\Phi(m,m-1)\\\cdots\\\Phi(m,0)
    \end{pmatrix},
    \label{eq:linear_system}
\end{equation}
where $[F_{\vec{q},Q}]_{j ,k}=e^{i4\pi q_j k}$ is a $Q\times (m+1)$ Fourier matrix.
Since $\Phi(m,\cdot)$ is real, we can compare real parts and imaginary parts of the linear system and construct a $2Q\times (m+1)$ equivalent linear system of real matrices. A canonical choice to solve a Fourier system is to let $\vec{q}=(q_0,...,q_{\lfloor m/2\rfloor+1})$ with $q_{j}\triangleq \frac{j}{2(m+1)}$, which leads to the inverse Fourier transform:
\begin{equation}
    \Phi(m,k)=\frac{1}{m+1}\left(A_{q_0}^m+2\cdot\sum_{j=1}^{\lfloor m/2\rfloor}\mathrm{Re}(e^{\frac{i\pi j(m-2k)}{m+1}}\cdot A^m_{q_j})\right)
    \label{eq:linear_system_solution}
\end{equation}
Therefore, using potentials with $q_{j}=\frac{j}{2(m+1)}$, $j=0,1,...,\lfloor m/2\rfloor$ allows us to fully capture walk profiles. More generally, choosing any arbitrary $\lfloor m/2\rfloor+1$ distinct potentials in the range $[0, \frac{1}{4}]$ can also solve the linear system, although an analytical solution like equation \eqref{eq:linear_system_solution} may not exist.

Note that all the results we present so far are not exclusively for $A_q$, but also can be adapted to other types of magnetic graph matrix, such as random walk magnetic matrix $D^{-1}A_q$ and normalized magnetic adjacency matrix $D^{-1/2}A_qD^{-1/2}$ ($D=\text{Diag}(d_1,d_2,...)$ is the diagonal degree matrix, counting both in-edges and out-edges). For example, let us denote $\widehat{A}_q\triangleq D^{-1}A_q$ and $\widehat{\Phi}(m,k)$ as the corresponding random-walk version. Equations \eqref{eq:ft}, \eqref{eq:linear_system} and \eqref{eq:linear_system_solution} still hold if we replace all $\Phi(m,k)$ by $\widehat{\Phi}(m,k)$ and all $A_q^m$ by $\widehat{A}_q^m$. Particularly, when we consider the random-walk version of walk profiles, $[\hat{\Phi}(m,k)]_{u,v}$ can be interpreted as the transition probability of $m$-step bidirectional random walks from node $u$ to $v$, whose random walk sequences contain exact $k$ forward edges and $m-k$ backward edges. The probabilistic interpretation implies the normalization property $\sum_{k=0}^m\sum_{v}[\widehat{\Phi}(m,k)]_{u,v}=1$.

\subsubsection*{Spectral sparsity and compressibility of walk profiles}
\begin{figure}[t!]
    \centering
    \includegraphics[width=1\linewidth]{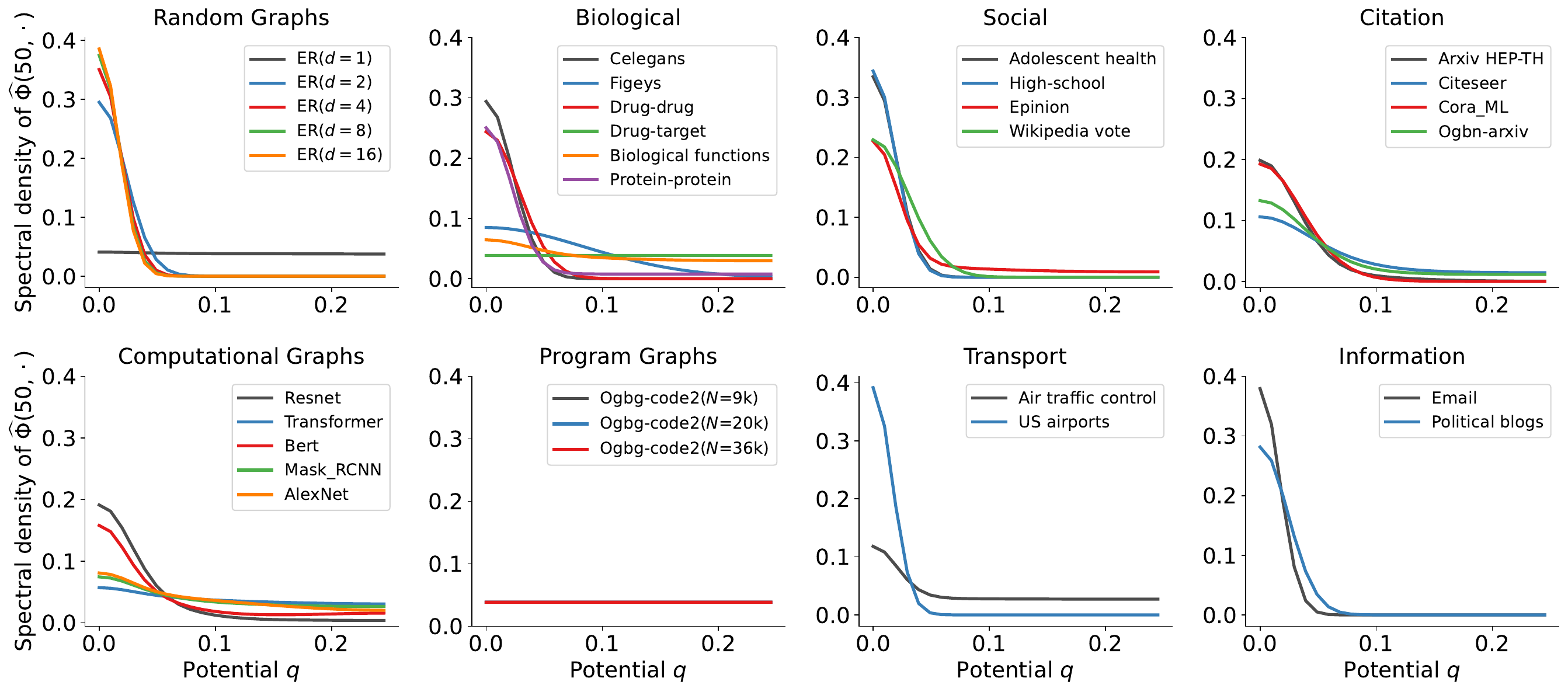}
    \caption{\textbf{Spectral density of walk profiles.} The average relative spectral density of walk profiles $\Phi(50, \cdot)$, i.e., $\langle |A_q^{50}|^2\rangle$ with respect to potential $q$, often shows a concentration in the low-frequency regime. The average $\langle \cdot\rangle$ is taken over all node pairs in the network.}
    \label{fig:spectral_density_q}
\end{figure}
\begin{figure}[t!]
    \centering
    \includegraphics[width=1\linewidth]{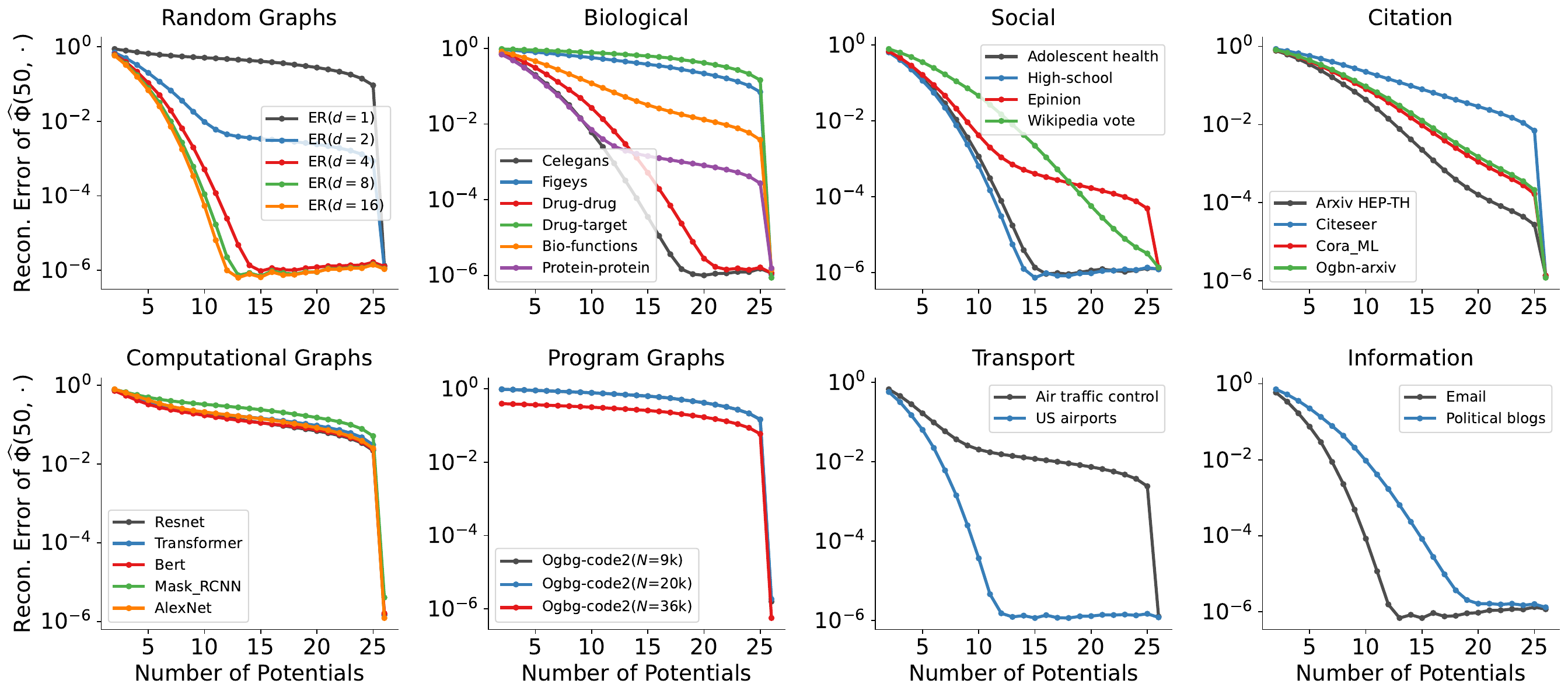}
    \caption{\textbf{Compressibility of walk profiles.} The average relative reconstruction error of walk profiles $\widehat{\Phi}(50, \cdot)$ computed using $[\widehat{A}_{q_0}^{50}, ..., \widehat{A}^{50}_{q_{Q-1}}]$ with varying numbers of the top-$Q$ smallest potentials, on a wide range of directed networks from different domains.}
    \label{fig:recon_error_q}
\end{figure}

Having established the walk profile as the Fourier transform of the magnetic matrix, a natural question arises: to what extent do all magnetic potentials (frequencies) contribute meaningfully to walk profiles? Many biological and physical systems exhibit sloppiness, meaning their behavior is largely governed by a few stiff combinations of parameters, while being relatively insensitive to variations in the remaining, sloppy directions~\cite{gutenkunst2007universally, transtrum2010nonlinear}. Similarly, in classical signal processing, many natural signals can be efficiently approximated using only a small number of dominant frequencies~\cite{candes2006robust, elad2010sparse, bruckstein2009sparse, chen2016signal}. These motivates us to ask whether walk profiles admit a similar kind of compressibility.


Here we explore the spectral density of random-walk walk profiles $\widehat{\Phi}_{u,v}(m,\cdot)$, i.e., the magnitude $|[\widehat{A}_q^m]_{u,v}|$ with respect to different potential $q$'s (formal definitions in Methods). We analyze Erd\H{o}s-Renyi random graphs and 26 real-world directed networks from 8 different domains, ranging in size from $N\approx 10^{2}$ to $N\approx 10^{6}$ (Methods and Supplementary Information Section~\ref{supp:datasets}).  We mainly adopt walk length $m=50$, while our discoveries and conclusions are consistent for different walk lengths (Supplementary Information Section~\ref{supp:compression}). 
Figure~\ref{fig:spectral_density_q} shows that on most networks, the energy in walk profiles tends to be largely concentrated on the low-frequency region. 


This strongly implies that walk profiles, the combinatorial structure of bidirectional walks, are compressible: although $\lfloor m/2\rfloor+1$ many potentials are required to precisely express walk profiles $\widehat{\Phi}(m,\cdot)$ in the frequency space, the top-$Q$ smallest potentials already carry the essential information of walk profiles and should be able to approximate walk profiles with neglectable reconstruction errors. To verify this, we choose the first $Q$ smallest potentials $\vec{q}=(\frac{0}{2(m+1)}, \frac{1}{2(m+1)},...,\frac{Q-1}{2(m+1)})$ and reconstruct normalized walk profiles $\widehat{\Phi}(m,\cdot)$ from $\widehat{A}_{\vec{q}}^m$ by solving the undertermined linear system (equation \eqref{eq:linear_system}). Figure \ref{fig:recon_error_q} show how reconstruction error (the relative error between true $\widehat{\Phi}(m,\cdot)$ and the estimated one, averaged over node pairs) varies with different number of potentials $Q$. As expected, as we increases $Q$, the reconstruction error gradually decreases, and goes to zero when $Q$ reaches $\lfloor m/2\rfloor+1$. Notably, for many real-world networks, adopting less than $m/5$ number of potentials is sufficient to yield a low reconstruction error of $5\%$. 
This finding, however, is far from universal and the data show substantial variation in reconstruction. For instance, computational and program graphs require more potentials than social and information networks. We study this behavior next. 

\subsubsection*{Spectral sparsity from decayed spectral radius}
\begin{figure}
    \centering
    \includegraphics[width=1\linewidth]{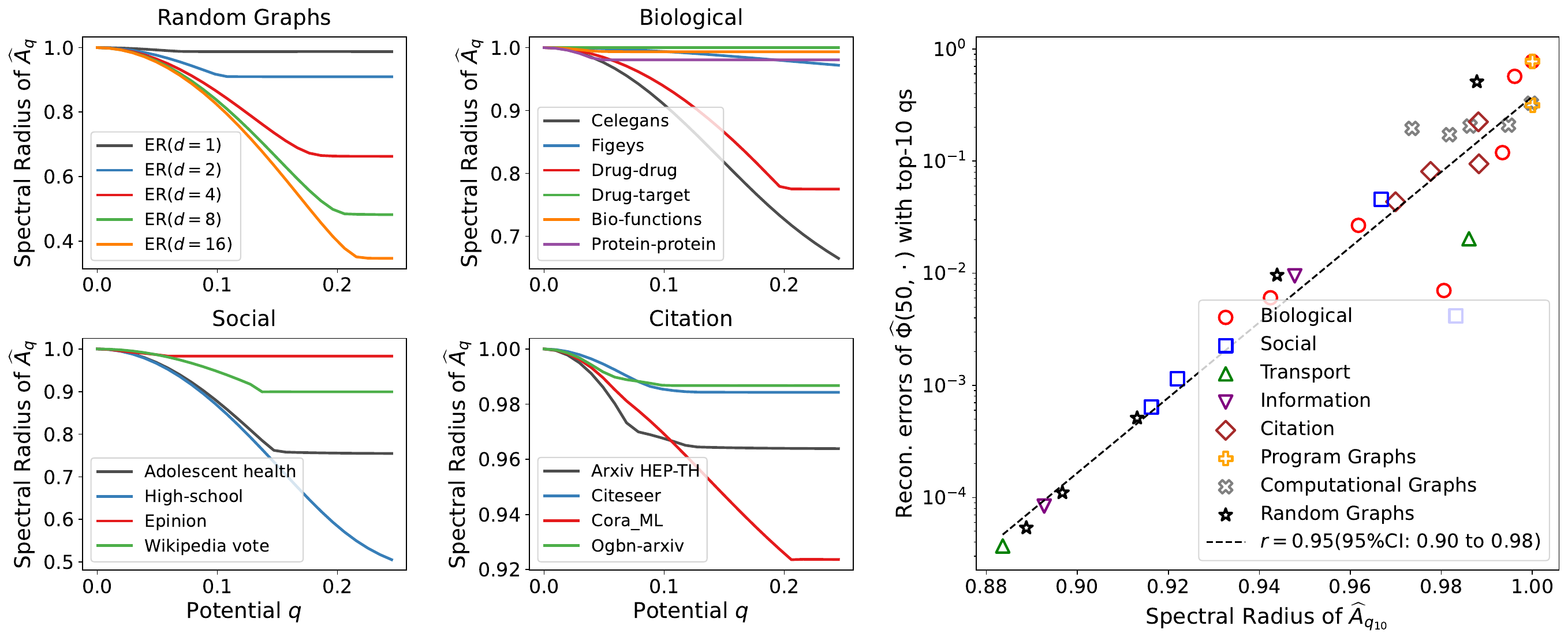}
    \caption{\textbf{Spectral radius of $\widehat{A}_q$ characterizes walk profiles compressibility.} Left four subfigures: the decay phenomenon of spectral radius of $\widehat{A}_q$, w.r.t. the potential $q$ on some representative networks. Right: The decay rate of spectral radius (measured by spectral radius of $\widehat{A}_{q_{10}}$ with $q_{10}=\frac{10}{2(50+1)}=5/51$) is strongly correlated with the compressibility of walk profiles $\widehat{\Phi}(50,\cdot)$, with Pearson correlation coefficient $r=0.95$ and 95\% confidence interval $(0.90, 0.98)$ estimated by Fisher transformation.}
    \label{fig:recon_error_spectral_radius}
\end{figure}

We have found the degree of energy concentration on low-frequency region, and thus the compressibility of walk profiles, varies across different networks. Here, we argue that this spectral sparsity phenomenon stems from and is governed by the global structure of directed networks, specifically, the spectral radius of random-walk magnetic matrix $\widehat{A}_q$. Intuitively, the magnitude $|[\widehat{A}_{q}^m]_{u,v}|^2$ is asymptotically governed by the spectral radius $\rho(\widehat{A}_q)$. Thus, we expect that if $\rho(\widehat{A}_q)<\rho(\widehat{A}_{q^{\prime}})$, then $|[\widehat{A}^m_{q}]_{u,v}|\ll |[\widehat{A}^m_{q^{\prime}}]_{u,v}|$ as $m$ goes large. Note that the spectral radius of $\widehat{A}_q$ is closely related to the smallest eigenvalue of magnetic Laplacian $\widehat{L}_q=I-\widehat{A}_q$, which is known to describe how ``consistent'' the magnetic graph is, measured by the maximal accumulated phases over any cycles in the directed graph~\cite{fanuel2018magnetic}.

The left subfigures of Figure \ref{fig:recon_error_spectral_radius} illustrate how the spectral radius $\rho(\widehat{A}_q)$ varies with $q$ (more results in Supplementary Information Section~\ref{supp:decayed_spectral_radius}). Interestingly, we consistently observe a decayed pattern of $\rho(\widehat{A}_q)$ as $q$ grows, which corresponds to the low-frequency structure of walk profiles. To further validate the connection between the spectral radius and walk profile compressibility, we compare the reconstruction errors of using $Q$ potentials vs the spectral radius of $\widehat{A}_{q_{Q}}$ with $q_{Q}=\frac{Q}{2(m+1)}$ (here we choose $m=50$ and $Q=10$). If the spectral radius of $\widehat{A}_{q_{Q}}$ is small, it should yield a stronger energy concentration effect and thus a better compressibility. Indeed, a clear positive correlation between these two is observed (Pearson correlation coefficient $r=0.95$ with 95\% confidence interval $[0.9,0.98]$), as shown in the right subfigure in Figure \ref{fig:recon_error_spectral_radius}. This also explains why the compressibility is poor for program graphs, since they are directed trees with $\rho(\widehat{A}_q)\equiv 1$, leading to the uniform spectral density of their walk profiles. Supplementary Information Section~\ref{supp:hard_examples} further supports the correlation between compressibility and spectral radius, showing that compression becomes progressively easier for graphs with a monotonically decreasing spectral radius.


\subsubsection*{Directed network motif detection}

\begin{figure}
    \centering
    \includegraphics[width=1\linewidth]{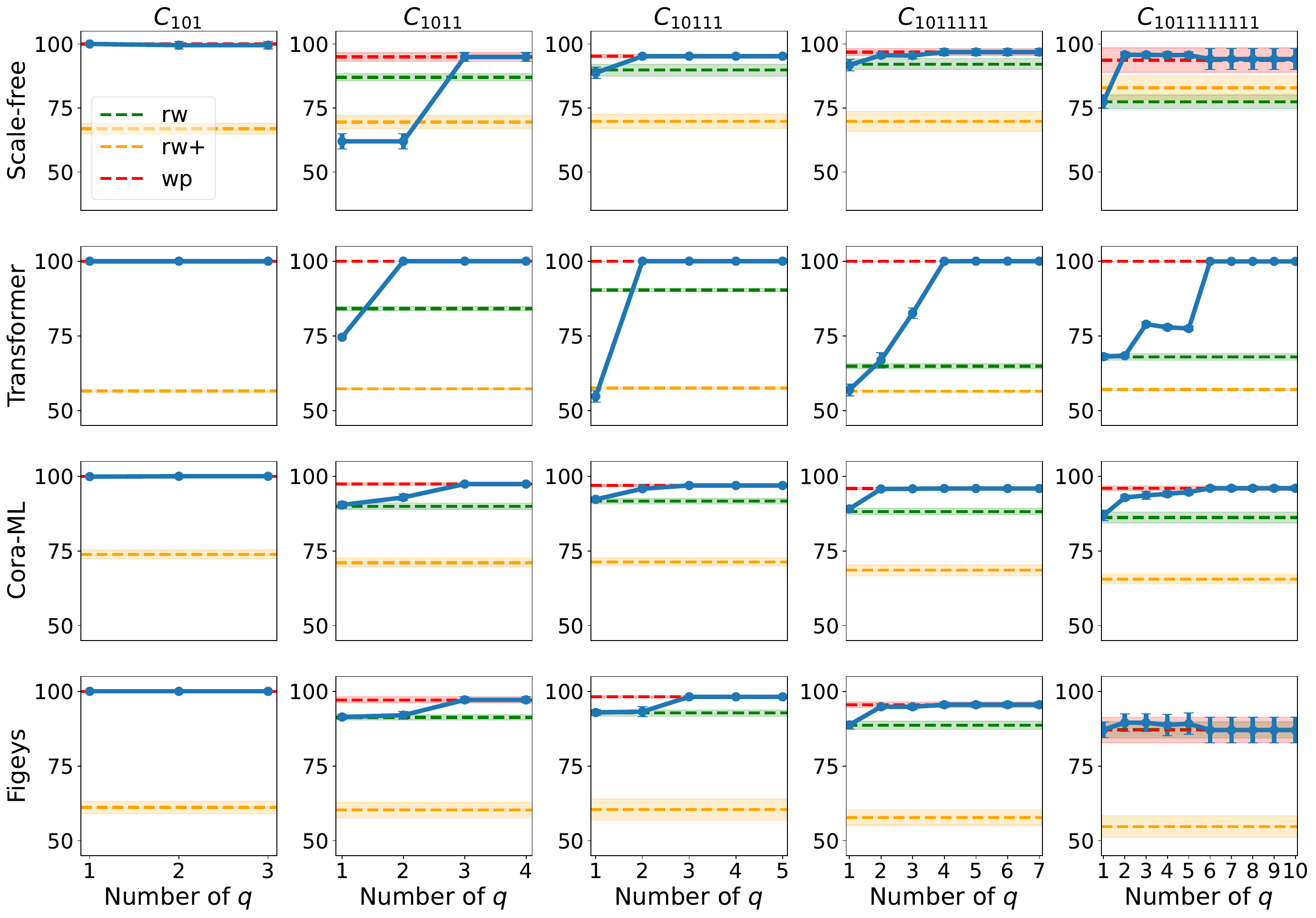}
    \caption{\textbf{Motif detection.} Each row corresponds to different datasets (networks), and each column corresponds to cycles with one backward edge of different length. A binary string $101\cdots 1$ is used to represent a general cycle $C_{101\cdots 1}$. Here $101\cdots1$ refers to the ordering of forward and backward edges along the cycle, e.g., $C_{101}$ means $u_1\xrightarrow{1} u_2\xleftarrow{0} u_3\xrightarrow{1} u_1$. In each subfigure, the blue line represents the the test AUC (Y axis) for detecting the target cycle using powers of Magnetic matrix, with varying number of potentials (X axis). We also compare it to undirected walks (rw), unidirected walks (rw+) and walk profiles (wp). The error band stands for one standard deviation for 10 runs of experiments with different random seeds.}
    \label{fig:motif_detection}
\end{figure}
 We now explore the ability to detect motifs in directed networks using the powers of magnetic adjacency matrix and walk profiles. Network motifs are substructures (subgraphs) that serve as the fundamental building blocks of complex networks, which can be used to uncover a network's functional abilities, e.g., for biological networks~\cite{milo2002network,wong2012biological, masoudi2012building}. Particularly, we are interested in direction-aware cycles, i.e., a sequence of unrepeated nodes $(u_1,u_2,...,u_m,u_1)$ that can be traversed with either forward edges or backward edges. Given a cycle length, there are different non-isomorphic classes of cycles, categorized by the order of the edge orientations. For instance, 4-cycles have four classes: fully oriented $u_1\rightarrow u_2
\rightarrow u_3\rightarrow u_4\rightarrow u_1$,  partially oriented with one backward edge $u_1\leftarrow u_2\rightarrow u_3\rightarrow u_4\rightarrow u_1$, partially oriented with two successive backward edges $u_1\leftarrow u_2\leftarrow u_3\rightarrow u_4\rightarrow u_1$, partially oriented with two nonsuccessive backward edges $u_1\leftarrow u_2\rightarrow u_3\leftarrow u_4\rightarrow u_1$.

We are interested in if the powers of magnetic adjacency matrix $([A^m_{q_0}]_{u,u},...,[A^m_{q_{Q-1}}]_{u,u})$ can identify if a node $u$ is a part of a cycle of length $m$. While walk-based methods generally cannot guarantee the detection of cycles, because walks permit repeated nodes whereas cycles do not, they are frequently adopted as a proxy for cycle detection for two main reasons: (a) Walk counts correlate well with the presence or absence of cycles. For instance, a walk count of zero for a specific length $m$ at node $u$ indicates the absence of $m$-cycles involving $u$. Moreover, a walk that does not revisit any intermediate nodes is, in fact, a cycle; 
  (b) Calculating walks is computationally efficient, primarily requiring standard matrix multiplication.

Remarkably, through an analysis of their theoretical capacity (Supplementary Information Section~\ref{supp:motif}), we can prove that even assuming walks have no repeated nodes, it is impossible for walk profiles or magnetic adjacency matrix powers to detect cycles containing more than one backward edge. This fundamental limitation arises because walk profiles only account for the number of backward edges, not their specific ordering. Consequently, walk profiles cannot differentiate between non-isomorphic cycles that possess the same quantity of backward edges but differ in the arrangement of these edges. This inability to distinguish extends to the powers of magnetic adjacency matrix, as they convey the same informational content as walk profiles, a point substantiated by $\text{Eq.}~\eqref{eq:ft}$.

On the other hand, walk profiles (and powers of magnetic adjacency matrices) can detect fully-oriented cycles and, importantly, one-backward-edge cycles, assuming the absence of repeated node counting in the walks. This is because when there is at most one backward edge in the cycles, the position of the single backward edge does not affect the detection of such substructures, due to the rotational symmetry of cycles. The class of one-backward-edge cycles include feed-forward loops prominent in biological systems~\cite{mangan2003structure}, which cannot be detected by unidirected walks. Notably, this capacity goes beyond prior spectral analyses of magnetic Laplacians~\cite{fanuel2017magnetic}, which primarily focused on purely directed cycles (i.e., cycles with all edges aligned to the same direction).


To validate our argument above, we train and test a linear classifier (logistic regression) to predict if a node $u$ is involved in a class of length-$m$ cycle, using $\Phi_{u,u}(m,\cdot)$ or $[A_{q_0}^m]_{u,u},...,[A_{q_{{Q-1}}}^m]_{u,u}$ as input features. We also compare to undirected walks, i.e., $(A+A^{\top})^m$, and unidirected walks, i.e., $A^m$ and $(A^{\top})^m$. The detailed descriptions of the experimental setup are in Methods. Figure~\ref{fig:motif_detection} shows that walk profiles can effectively capture cycles involving a single backward edge. Meanwhile, the performance of the magnetic adjacency matrix, with an increasing number of potentials, progressively approaches the performance of walk profiles, as anticipated from their theoretical connection.  


\subsubsection*{Directed network link prediction}
\begin{figure}
    \centering
    \includegraphics[width=1\linewidth]{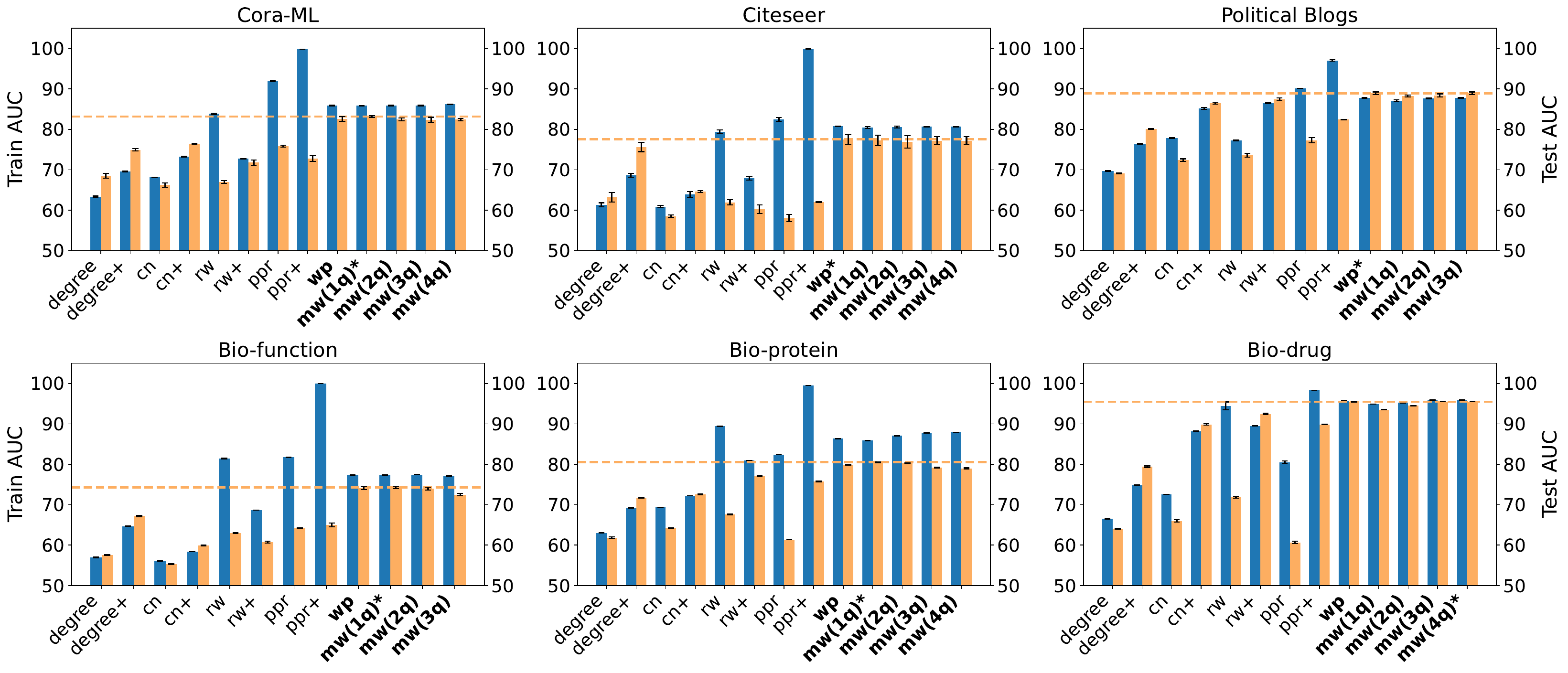}
    \caption{\textbf{Magnetic adjacency matrix and walk profiles are effective for link prediction.} The blue bars represent the training AUCROC scores, and the orange bars represent the test AUCROC scores (higher is better for both). Methods labeled \textbf{mw} refer to the magnetic walk results, and \textbf{wp} denotes the walk profile, both of which are highlighted in bold. Other baseline methods are described in the Methods section. The method with the best test AUCROC is marked with an asterisk, and a horizontal dashed line indicates its test ROC-AUC value. The error bar stands for one standard deviation of 3 runs of experiments with different random seeds.}
    \label{fig:link_prediction}
\end{figure}
We next explore the application of magnetic matrix to link prediction tasks, by leveraging their ability to capture directed structural patterns that can help the inference of missing or future edges. We train and test a linear classifier to predict if an directed edge exists from $u$ to $v$. We use link features $[A_{q_0}^1]_{u,v},...,[A^1_{q_{Q-1}}]_{u,v},...,[A_{q_0}^m]_{u,v},...,[A^m_{q_{Q-1}}]_{u,v}$ representing the walk information within an $m$-hop subgraph. For comparison, we evaluate several baseline methods commonly used for link prediction (refer to Methods). It is worth noticing that the baseline methods are mostly only capable of capturing unidirectional information, i.e., none of them can express bidirectional patterns such as feed-forward loops (3-cycles with one backward edge). Figure~\ref{fig:link_prediction} shows the AUCROC score on the training set and the test set using different link features mentioned. Compared to other baseline methods, both magnetic adjacency matrix and walk profiles consistently achieve a better test prediction performance, highlighting their effectiveness in capturing crucial substructures and predicting missing edges. Remarkably, the powers of magnetic adjacency matrix with only a single potential achieve competitive results to walk profiles, beating all baseline methods. This again suggests the empirical compressibility of walk profiles in the context of inference of missing directed links. 
\subsubsection*{Discussion}

The issue of directed networks for modeling complex systems with asymmetric interactions has a rich history. Despite its breadth of applications, directed networks have posed theoretical and algorithmic challenges, particularly in spectral and walk-based analyses. This arises due to the tension between the asymmetry of the interaction with the desire for symmetry for node-level properties, along with the inherent non-reversibility of the common associated random walk processes. These asymmetries hinder the direct application of the powerful spectral theory and algorithmic frameworks developed for undirected graphs. To address this, researchers have proposed various symmetric matrix representations for directed graphs, including straightforward symmetrization via 
$A+A^{\top}$, bipartite graph representations~\cite{kleinberg1999authoritative,dhillon2001co,Rohe2016}, bibliometric symmetrization $AA^{\top}+A^{\top}A$~\cite{rudrich2020random}, asymmetric Laplacians~\cite{mohar1997some,boley2011commute}, directed graph Laplacian based on circulations~\cite{chung2005laplacians}, and motif-based symmetrizations~\cite{benson2016higher}. The magnetic Laplacian we study emerged from this study as a flexible approach to ``symmetric'' modeling of directed networks. It preserves the ``symmetry'' of problems through the complex-valued edge weights encoding of edge orientation, yielding a Hermitian matrix representation of the directed networks. Moreover, it admits a natural physical interpretation as modeling the phase acquired by a charged particle (e.g., an electron) in a magnetic field. These properties make the magnetic matrix a compelling framework for understanding the geometry and dynamics of directed networks.

At the same time, there has been a long-standing and fruitful exchange of ideas among physics, combinatorics, algorithms, and algebra in the study of undirected networks. Foundational concepts such as the graph Laplacian, graph cuts, random walks, and the heat (diffusion) equation are deeply interconnected, offering a unified lens for both theoretical analysis and practical algorithms~\cite{chung1997spectral}. While the magnetic matrix provides a clear physical interpretation for directed networks, its connection to combinatorial structure remains less clear. 
This paper resolves that issue by uncovering a new combinatorial foundation for the magnetic matrix via the concept of walk profiles. 
Walk counts are a fundamental concept in network science and these new formulations could enable new relationships with heat kernels and matrix resolvents~\cite{higham2008functions}. They also generalize to non-backtracking random walks~\cite{hashimoto1989zeta, krzakala2013spectral}, which we leave for future studies. 
This connection offers a novel perspective to study complex systems with directional interactions, and thus opens exciting new avenues for theory and applications of complex networks.

\putbib
\end{bibunit}

\begin{bibunit}
\clearpage
\subsection*{Methods}

\subsubsection*{Network datasets}
The list of all datasets and their statistics are provided in Supplementary Information Section~\ref{supp:datasets}. 

The synthetic datasets in our paper include Erd\H{o}s-R'{e}nyi random graphs~\cite{erdds1959random} (ER graphs) and scale-free random directed graphs~\cite{bollobas2003directed}. For ER graphs, we set number of nodes $N=5000$ and each node pair $(u,v)$ can be an directed edge independently with probability $p=d/N$, with varying $d$. For scale-free random directed graphs, it starts with a directed 3-cycle and sequentially add nodes and edges until it reaches the desired number of nodes $N$. Each step it can: (1) with probability $\alpha$, add a new node connected to an existing node chosen randomly according to the in-degree distribution; (2) with probability $\beta$, add an edge between two existing nodes. One existing node is chosen randomly according the in-degree distribution and the other chosen randomly according to the out-degree distribution; (3) with probability $1-\alpha-\beta$, add a new node connected to an existing node chosen randomly according to the out-degree distribution. We set number of nodes $N=3000$ and $\alpha=0.75, \beta=0.2$.

The real-world datasets covers 26 networks from 8 different domains, ranging in size from $N\approx 10^2$ to $N\approx 10^6$. It contains widely adopted domains such as a) Biological networks (Celegans~\cite{white1986structure}, Figeys~\cite{ewing2007large}, Drug-drug~\cite{wishart2018drugbank}, Drug-target~\cite{zitnik2018modeling}, Bio-functions~\cite{ashburner2000gene}, Protein-protein~\cite{agrawal2018large}); b) Citation networks (Cora\_ML~\cite{mccallum2000automating}, Citeseer~\cite{giles1998citeseer}, ogbn-arxiv~\cite{hu2020open}, Arxiv HEP-TH~\cite{leskovec2005graphs}); c) Social networks (Adolescent health~\cite{massa2009bowling}, High-school~\cite{coleman1964introduction}, Epinion~\cite{richardson2003trust}, Wikipedia vote~\cite{leskovec2010signed}); d) Information networks (Political blogs~\cite{adamic2005political}, Email~\cite{yin2017local}) and e) Transportation networks (Air traffic control~\cite{faa2017}, US airports~\cite{opsahl2011anchorage}). We also explore two emerging domains of interest in directed network analysis: program graphs (ogbg-code2~\cite{hu2020open}) and TPU computational graphs (AlexNet, ResNet, Transformers, Mask-RCNN, BERT)~\cite{phothilimthana2023tpugraphs}.

\subsubsection*{Spectral density of walk profiles}
The spectral density of walk profiles $\Phi_{u,v}(m,\cdot)$, by equation~\eqref{eq:ft}, is $|[A_q^m]_{u,v}|^2$. As we choose discrete frequencies $q_{j}=\frac{j}{2(m+1)}$, we further define the relative spectral density
\begin{equation}
    S_{u,v}(q_j)\triangleq \frac{|[A_{q_j}^m]_{u,v}|^2}{\sum_{\ell =1}^{\lfloor m/2\rfloor +1}|[A_{q_\ell}^m]_{u,v}|^2}.
\end{equation}
For each network dataset, we report the average relative spectral density over a subset of node pairs.  W first randomly sample $N_0< N$ source nodes (denoted by $\mathcal{V}_0\subseteq \mathcal{V}$) from the node set, and then estimate over $N_0\cdot N$ node pairs where the first node belongs to this subset $\mathcal{V}_0$,
\begin{equation}
    \langle S(q_j)\rangle \triangleq \frac{1}{N_0\cdot N}\sum_{u\in\mathcal{V}_0,v\in\mathcal{V}}S_{u,v}(q_j).
\end{equation}
This avoids the lengthy calculation over $N^2$ node pairs $(u,v)$ on large graphs. We found that $N_0=2000$ is sufficient for a good estimate in all networks we adopt, since further increasing $N_0$ does not alter the value of $\langle S(q_j)\rangle$ within our chosen precision (four decimal places). 

\subsubsection*{Walk profile reconstruction algorithm}

The reconstruction of walk profiles $\Phi_{u,v}(m,\cdot)$ from the powers of magnetic matrix is via solving the linear system~Eq.\eqref{eq:linear_system}. In our empirical study, as the walk profiles are always real, we can further expand the linear system in real domain by comparing the real and imaginary parts on both sides:
\begin{equation}
    \begin{pmatrix}
       \mathrm{Re}(e^{i2\pi q_0m}\cdot A^m_{q_0})\\\cdots\\ \mathrm{Re}(e^{i2\pi q_{Q-1}m}\cdot A_{q_{Q-1}}^m) \\\mathrm{Im}(e^{i2\pi q_0m}\cdot A^m_{q_0})\\\cdots\\\mathrm{Im}(e^{i2\pi q_{Q-1}m}\cdot A_{q_Q}^m)
    \end{pmatrix}=
    \begin{pmatrix}
    \mathrm{Re}(F_{\vec{q}, m+1})\\
    \mathrm{Im}(F_{\vec{q}, m+1})
    \end{pmatrix}
    \begin{pmatrix}
        \Phi(m,m)\\\Phi(m,m-1)\\\cdots\\\Phi(m,0)
    \end{pmatrix},
    \label{eq:linear_system_2}
\end{equation}
 To solve this equation, we leverage the Moore–Penrose inverse $(\cdot)^{+}$ of the expanded Fourier matrix (which is of size $2Q$ by $m+1$) to solve $\Phi(m, \cdot)$:
\begin{equation}
    \Phi^{\mathrm{recon}}(m,\cdot)= \begin{pmatrix}
    \mathrm{Re}(F_{\vec{q}, m+1})\\
    \mathrm{Im}(F_{\vec{q}, m+1})
    \end{pmatrix}^{+}\begin{pmatrix}
       \mathrm{Re}(e^{i2\pi q_0m}\cdot A^m_{q_0})\\\cdots\\ \mathrm{Re}(e^{i2\pi q_{Q-1}m}\cdot A_{q_{Q-1}}^m) \\\mathrm{Im}(e^{i2\pi q_0m}\cdot A^m_{q_0})\\\cdots\\\mathrm{Im}(e^{i2\pi q_{Q-1}m}\cdot A_{q_{Q-1}}^m)
    \end{pmatrix}.
    \label{eq:reconstruct_wp}
    \end{equation}
We measure the reconstruction errors by the relative L2 distance defined by
\begin{equation}
    \mathrm{error}=\frac{1}{N\cdot N_0}\sum_{u\in\mathcal{V}_0,v\in\mathcal{V}}\frac{\norm{\widehat{\Phi}_{u,v}(m,\cdot)-\widehat{\Phi}^{\mathrm{recon}}_{u,v}(m,\cdot)}}{\max\{\norm{\widehat{\Phi}_{u,v}(m,\cdot)}, \norm{\widehat{\Phi}^{\mathrm{recon}}_{u,v}(m,\cdot)}\}},
\end{equation}
Again, here we estimate the average error on the same subset of nodes as those in the estimation of the average spectral density.

\subsubsection*{Motif detection method}
The motif detection is a node-level binary classification task to determine if a node is involved in a target motif. We consider the minimally frustrated cycles as the target motif, i.e., directed cycles with one backward edges. Note that we are only interested in whether a node is in such a motif, without considering the possibly different roles of nodes in a motif. For example, consider a feed-forward loop that consists of $u_1\to u_2\to u_3$ and $u_1\to u_3$, then nodes $u_1,u_2,u_3$ are all considered in such a motif. We use the open-sourced library for directed motif searching~\cite{matelsky2021dotmotif} to acquire the ground-truth labels.

Given a directed network, we randomly partition the node set into a training set, a validation set, and a test set. We train a logistic regression on the training set, evaluate the ROCAUC on the validation set, and report the ROCAUC on the test set. We use \texttt{sklearn.linear\_model.LogisticRegression} as the optimizer.

To detect a target cycle of length $m$ at node $u$, the input features to the logistic regression will be one of the four: (1) \texttt{wp}: walk profiles of length $m$ $(\Phi_{u,u}(m,0),...,\Phi_{u,u}(m,m))$; (2) \texttt{mw}: $m$-th powers of magnetic matrix $([A_{q_0}]_{u,u}^m,...,[A_{q_{Q-1}}]_{u,u}^m)$ with $q_{j}=\frac{j}{2(m+1)}$; (3) \texttt{rw}: undirected walks of length $m$ $(A+A^{\top})^m_{u,u}$; (4) \texttt{rw+}: unidirected walks of length $m$ $(A^m_{u,u},(A^{\top})^m_{u,u})$. For powers of magnetic matrix, we first apply the reconstruction mapping (Eq.\eqref{eq:reconstruct_wp}) to obtain $\Phi^{\mathrm{recon}}_{u,u}(m,\cdot)$ for for improving numerical stability. These features are calculated on the whole graph, and each dimension of the input features is normalized by subtracting the mean and dividing by the standard deviation on the training set. We implement both the adjacency-matrix version (i.e., each $A$ here is the adjancency matrix) and the random-walk version (i.e., each $A$ here is replaced by the random walk matrix $D^{-1}A$), and for each target motif detection we report the one with better performance.

\subsubsection*{Link prediction method}
The link prediction task is to predict unobserved edges in an incomplete network. We follow the standard setting of link prediction, randomly partitioning the edge set $\mathcal{E}$ of a directed network into training set $\mathcal{E}_{\mathrm{tr}}$, validation set $\mathcal{E}_{\mathrm{val}}$ and test set $\mathcal{E}_{\mathrm{te}}$. The observed true edges are called positive edges. To perform logistic regression for link prediction, we also need negative samples that are non-edges. To this end, we introduce two types of negative edges: for each $(u,v)\in \mathcal{E}_{\mathrm{tr}}$ (and $\mathcal{E}_{\mathrm{val}}, \mathcal{E}_{\mathrm{te}}$ correspondingly), we add (1) reverse negatives: if $(v,u)\notin \mathcal{E}$, then add $(v,u)$ as a negative edge; (2) structured negatives: for each $(u,v)\in \mathcal{E}$, find $(u, w)\notin \mathcal{E}$ and $(w^{\prime}, v)\notin\mathcal{E}$ with some node $w, w^{\prime}$, and add both of them as negative edges. The final negative edges $\bar{\mathcal{E}}_{\mathrm{tr}}$ for training is formed by randomly choosing negative edges from the selected negative edges described above such that it has the same number of positive edges and the ratio between reverse negatives and two types of structured negatives are 1:1:1. After constructing positive edges $\mathcal{E}_{\mathrm{tr}},\mathcal{E}_{\mathrm{val}},\mathcal{E}_{\mathrm{te}}$ and negative edges $\bar{\mathcal{E}}_{\mathrm{tr}},\bar{\mathcal{E}}_{\mathrm{val}},\bar{\mathcal{E}}_{\mathrm{te}}$, we perform logistic regression trained on dataset $\mathcal{E}_{\mathrm{tr}}$ and $\bar{\mathcal{E}}_{\mathrm{tr}}$, where $(u,v)\in \mathcal{E}_{\mathrm{tr}}$ has label $1$ and label $0$ otherwise. Again we use \texttt{sklearn.linear\_model.LogisticRegression} as the optimizer.

 We consider various baselines as the input features to the logistic regression. To predict if $(u,v)$ is an edge, we use one of the followings: (1) \texttt{wp}: walk profiles up to length $m$ $(\Phi_{u,v}(1,0), \Phi_{u,v}(1,1), \Phi_{u,v}(2,0),...,\Phi_{u,u}(m,m))$; (2) \texttt{mw}: $1,2,...,m$-th powers of magnetic matrix $([A_{q_1}]_{u,v},[A_{q_2}]_{u,v},...,[A_{q_Q}]^m_{u,v})$, with $q_{\ell}=\frac{\ell}{2(m+1)}$; (3) \texttt{rw}: undirected walks up to length $m$  $(A+A^{\top})^m_{u,u}$; (4) \texttt{rw+}: unidirected walks up to length $m$ $(A^m_{u,u},(A^{\top})^m_{u,u})$; (5) \texttt{degree}: node total in-degree + out-degree $(d_{u}, d_{v})$; (6) \texttt{degree+}: node in-degree and out-degree $(d^{\mathrm{in}}_u,d^{\mathrm{out}}_u, d^{\mathrm{in}}_v, d^{\mathrm{out}}_v)$; (7) \texttt{cn}: undirected common neighbor: $N(u)\cap N(v)$ ($N(u)$ means the in-neighbor and out-neighbor of $u$); (8) \texttt{cn+}: directed common neighbor: $(N^{\mathrm{in}}(u)\cap N^{\mathrm{in}}(v), N^{\mathrm{out}}(u)\cap N^{\mathrm{in}}(v), N^{\mathrm{in}}(u)\cap N^{\mathrm{out}}(v), N^{\mathrm{out}}(u)\cap N^{\mathrm{out}}(v))$, where $N^{\mathrm{in}}(u), N^{\mathrm{out}}(u)$ stands for the in-neighbor $\{w: (w, w)\in\mathcal{E}\}$ and out-neighbor $\{w: (u,w)\in\mathcal{E}\}$ respectively; (9) \texttt{ppr}: undirected-version Personalized PageRank: $[(1-\gamma)(1-\gamma D^{-1}(A+A^{\top}))^{-1}]_{u,v}$; (10) \texttt{ppr+}: directed-version Personalized PageRank: $([(1-\gamma)(1-\gamma D_{\mathrm{out}}^{-1}A)^{-1}]_{u,v}, [(1-\gamma)(1-\gamma D_{\mathrm{in}}^{-1}A^{\top})^{-1}]_{u,v})$. We use $\gamma=0.15$ for PageRank throughout our experiments.

Note that the input features to the classifier are calculated differently for different data splits: (1) During training and validation, the input features to predict the link $(u,v)$ are calculated on the graph induced by $\mathcal{E}_{\mathrm{tr}}\backslash\{(u,v)\}$; (2) During testing, the input features are instead calculated on the graph induced by $\mathcal{E}_{\mathrm{tr}}\cup \mathcal{E}_{\mathrm{val}}\backslash\{(u,v)\}$.

\subsubsection*{Code availability}
Our code for reproducing the results is available at https://github.com/Graph-COM/Walk\_Profiles.

\subsubsection*{Acknowledgments}
This work is primarily supported by NSF awards PHY-2117997, IIS-2239565, and IIS-2428777, as well as J.P. Morgan Faculty Award and Meta Research Award.
David F. Gleich would like to acknowledge partial support from NSF IIS-2007481, DOE DE-SC0023162, and the IARPA AGILE program.

\renewcommand{\refname}{References for Methods}
\putbib
\end{bibunit}

\begin{bibunit}    
\clearpage
\renewcommand\thesubsection{S.\Roman{subsection}}
\section*{Supplementary Information}
\subsection{Network datasets}
\label{supp:datasets}

\begin{table}[ht!]
\resizebox{0.99\textwidth}{!}{
\begin{tabular}{lllll}
\hline
Category                              & Name                & Description                                                                                                                                                                 & $N$ & $E$ \\ \hline
\multirow{6}{*}{Biological}           & Celegans~\cite{white1986structure}            &           A metabolic network of Caenorhabditis elegans (C. elegans).                                 & 453             & 4596            \\
                                      & Figeys~\cite{ewing2007large}              &           A protein-protein interaction network in Homo sapiens.                                 & 2239            & 6452            \\
                                      & Drug-drug~\cite{wishart2018drugbank}           &            A network of FDA-approved drugs.                & 1514            & 48514           \\
                                      & Drug-target~\cite{zitnik2018modeling}        &           A drug-target interaction network linking drugs to their protein targets.            & 3932            & 18690           \\
                                      & Bio-functions~\cite{ashburner2000gene}       &           A hierarchical network of biological functions defined by Gene Ontology. & 46027           & 106510          \\
                                      & Protein-protein~\cite{agrawal2018large}     &           A protein-protein interaction network of physical interactions in cells.               & 21557           & 342353          \\ \hline
\multirow{4}{*}{Citation}             & Cora\_ML~\cite{mccallum2000automating}            &           A citation network of papers.                                                            & 2995            & 8416            \\
                                      & Citeseer~\cite{giles1998citeseer}            &           A citation data set from an automatic citation indexing system.                                                      & 3312            & 4715            \\
                                      & Ogbn-arxiv~\cite{hu2020open}          &           A citation network of Computer Science arXiv papers indexed by MAG.                                     & 169343         & 1166243         \\
                                      & Arxiv-HEP-TH~\cite{leskovec2005graphs}        &           A citation network of high-energy physics theory papers from arXiv.                            & 27770           & 352807          \\ \hline
\multirow{4}{*}{Social}               & Adolescent health~\cite{massa2009bowling}   &           A friendship network with top five male and female friends.                                & 2539            & 12969           \\
                                      & High-school~\cite{coleman1964introduction}         &           A friendship network between boys in a high-school in Illinois.                                                                                                              & 70              & 366             \\
                                      & Epinions~\cite{richardson2003trust}            &           A directed trust network from Epinions.com.                                               & 75879           & 508837          \\
                                      & Wiki-vote~\cite{leskovec2010signed}           &           A voting network from Wikipedia administrator.       & 7115            & 103689          \\ \hline
\multirow{3}{*}{Program}              & Ogbg-code2 (9k)~\cite{hu2020open}               & \multirow{3}{*}{Abstract syntax trees obtained from Python method definitions.}                                                                 & 9148            & 9147            \\
                                      & Ogbg-code2 (20k)~\cite{hu2020open}    &                                                                                                                                                                                      &   20141              &  20140                \\
                                      & Ogbg-code2 (36k)~\cite{hu2020open}    &                                                                                                                                                                                        &    36123             &     36122            \\ \hline
\multirow{2}{*}{Information}          & Political blogs~\cite{adamic2005political}     &            A directed network of hyperlinks between political weblogs.                                                    & 1224            & 19025           \\
                                      & Email~\cite{yin2017local}               &            An email network among employees at an European research institution.           & 2029            & 39264           \\ \hline
\multirow{2}{*}{Transport}            & Air traffic control~\cite{faa2017} &           A flight network from the U.S. FAA’s National Flight Data Center.               & 1226            & 2615            \\
                                      & USA airports~\cite{opsahl2011anchorage}        &           A directed network of flight routes between U.S. airports.                                     & 1574            & 28236           \\ \hline
\multirow{5}{*}{\parbox{1.8cm}{Computational\\ graphs}} & AlexNet~\cite{phothilimthana2023tpugraphs}             &           \multirow{5}{*}{Tensor Computational Graphs of neural architectures.}                                                                                             & 372             & 597             \\
                                      & ResNet~\cite{phothilimthana2023tpugraphs}              &                                                                                                                                                                                       & 5605            & 9018            \\
                                      & Mask-RCNN~\cite{phothilimthana2023tpugraphs}           &                                                                                                                                                                                       & 14680           & 23604           \\
                                      & Transformer~\cite{phothilimthana2023tpugraphs}         &                                                                                                                                                                                       & 13342           & 21709           \\
                                      & Bert~\cite{phothilimthana2023tpugraphs}                &                                                                                                                                                                                       & 21338           & 37243           \\ \hline
\end{tabular}}
\caption{Description and statistics of directed network datasets. $N$ and $E$ represents the number of nodes and edges.\label{tab:dataset}}
\end{table}

This section describes the network datasets we adopted in our numerical experiments.

\textbf{Synthetic directed networks.} We adopted two types of random directed graph models: Erd\H{o}s-Renyi random graphs~\cite{erdds1959random} (ER graphs) and scale-free random directed graphs~\cite{bollobas2003directed}. For ER graphs, we set number of nodes $N=5000$ and each ordered node pair $(u,v)$ can be an directed edge independently with probability $p=d/N$. We set $d\in \{1, 2, 4, 8, 16\}$.
For scale-free random directed graphs, it starts with a directed 3-cycle and sequentially add nodes and edges until it reaches the desired number of nodes $N$. Each step it can: (1) with probability $\alpha$, add a new node connected to an existing node chosen randomly according to the in-degree distribution; (2) with probability $\beta$, add an edge between two existing nodes. One existing node is chosen randomly according the in-degree distribution and the other chosen randomly according to the out-degree distribution; (3) with probability $1-\alpha-\beta$, add a new node connected to an existing node chosen randomly according to the out-degree distribution. We set number of nodes $N=3000$ and $\alpha=0.75, \beta=0.2$.

\textbf{Real-world directed networks.} Table~\ref{tab:dataset} lists the real-world directed network datasets we adopted in our numerical experiments. The references in the table refer to sources of the original data, and we retrieved the data from the following sources: (1) Celegans, Figyes, Adolescent health, High-school, Political blogs, Email, Air traffic control and USA airports are from~\cite{li2020link}; (2) Cora\_ML and Citeseer are from~\cite{he2024pytorch}; (3) Drug-drug, Drug-target, Bio-functions and Protein-protein are from~\cite{biosnapnets}; (4) Arxiv-HEP-TH, Epinion, Wiki-vote are from~\cite{snapnets}; (5) Ogbn-arxiv, Ogbg-code2 are from~\cite{hu2020open}; (6) AlexNet, ResNet, Mask-RCNN, Transformer and Bert are from~\cite{phothilimthana2023tpugraphs}.

\subsection{Magnetic graph matrix, walk profiles, and their Fourier transform}
\label{supp:fourier-result}

The walk profile $\Phi_{u,v}(m,k)$ is defined by the number of length-$m$ bidirectional walks from node $u$ to node $v$ with $k$ forward edges and $m-k$ backward edges. That is, 
\begin{equation}
    \Phi_{u,v}(m,k)  = \underbrace{[A^{k}(A^{\top})^{m-k}]_{u,v}+[A^{k-1}A^{\top}A(A^{\top})^{m-k-1}]_{u,v}+\cdots +[(A^{\top})^{m-k}A^k]_{u,v}.}_{\text{all matrix products of $m$ matrices, with $k$ of them being $A$ and others being $A^{\top}$}}
\end{equation}
By the definition, there is a iterative or recursive formula to calculate walk profiles:
\begin{equation}
    \Phi(m,k)=\Phi(m-1,k-1)A+\Phi(m-1,k)A^{\top},
\label{eq:wp_recursive}
\end{equation}
where in practice we can further leverage the sparsity of $A$ to speed up the calculation.

Now we are ready to see the powers of magnetic matrix $A_q=Ae^{i2\pi q}+A^{\top}e^{-i2\pi q}$ is naturally described by walk profiles:
\begin{align}
    A_q^{m}&= (Ae^{i2\pi q}+A^{\top}e^{-i2\pi q})^m \\
    &=\underbrace{e^{i2\pi qm}A^m+e^{i2\pi q(m-1)}e^{-i2\pi q}A^{m-1}A^{\top}+e^{i2\pi q(m-1)}e^{-i2\pi q}A^{m-2}A^{\top}A+\cdots e^{-i2\pi qm}(A^{\top})^m}_{\text{binomial expansion}}\\
    &=e^{i2\pi qm}A^m + e^{i2\pi q(m-1)}e^{-i2\pi q}\underbrace{(A^{m-1}A^{\top}+A^{m-2}A^{\top}A+\cdots)}_{\text{group matrix products with the same phase}}+\cdots e^{-i2\pi qm}(A^{\top})^m\\
    &=\sum_{k=0}^m (e^{i2\pi q})^k (e^{-i2\pi q})^{m-k}\cdot \underbrace{\left\{A^{k}(A^{\top})^{m-k}+A^{k-1}A^{\top}A(A^{\top})^{m-k-1}+\cdots +(A^{\top})^{m-k}A^k\right\}}_{\text{all matrix products of $m$ matrices, with $k$ of them being $A$ and others being $A^{\top}$}}\\
    &= \sum_{k=0}^m (e^{i2\pi q})^k (e^{-i2\pi q})^{m-k} \Phi(m,k)\\&=\sum_{k=0}^m e^{i2\pi q(2k-m)}\Phi(m,k)\\
    &=e^{-i2\pi qm}\sum_{k=0}^me^{i4\pi qk}\Phi(m,k). \label{eq:wp2mw}
\end{align}
Therefore, we can interpret $A_q^m$ as the Fourier transform of $\Phi(m,\cdot)=(\Phi(m,m), ..., \Phi(m, 0))$ at frequency $2q$. Since $\Phi(m, \cdot)$ is a length-$(m+1)$ vector, it generally requires $m+1$ different Fourier components $A_{q_0}^m, ..., A^m_{q_{m}}$ to faithfully encode $\Phi(m,\cdot)$. Indeed, we can write the Fourier transform of $\Phi(m,\cdot )$ with multiple magnetic potentials (frequencies) into a linear system:
       \begin{equation}
        \begin{pmatrix}
           e^{i2\pi q_0m}\cdot A^m_{q_0}\\ e^{i2\pi q_1m}\cdot A_{q_1}^m\\\cdots\\ e^{i2\pi q_{Q-1}m}\cdot A_{q_{Q-1}}^m 
        \end{pmatrix}=
        F_{\vec{q}, m+1}\begin{pmatrix}
            \Phi(m,m)\\\Phi(m,m-1)\\\cdots\\\Phi(m,0)
        \end{pmatrix},
    \end{equation}
where $[F_{\vec{q},Q}]_{j ,k}=e^{i4\pi q_j k}$ is a $Q\times (m+1)$ Fourier matrix. In principle, as long as $Q\ge m+1$ and  $q_{j}\neq q_{j^{\prime}}$ (and they do not differ up to a period $1/2$), the linear system is well-posed and thus we can exactly reconstruct walk profiles from powers of magnetic matrix. When $Q=m+1$, a canonical choice is to choose $q_{j}=\frac{j}{2(m+1)}$ so that it forms an orthogonal Fourier basis and $\frac{1}{\sqrt{m+1}}F_{\vec{q}, m+1}$ is a unitary matrix satisfying $\frac{1}{m+1}F_{\vec{q}, m+1}F^{\dagger}_{\vec{q}, m+1}=\frac{1}{m+1}F_{\vec{q}, m+1}^{\dagger}F_{\vec{q}, m+1}=I_{m+1,m+1}$.

We can further reduce the number of magnetic potentials since the walk profiles $\Phi$ are real-valued as long as the directed network has real-valued edge weights. In such a case, we can compare the real and imaginary parts of both sides of the linear equation and obtain
\begin{equation}
    \begin{pmatrix}
       \mathrm{Re}(e^{i2\pi q_0m}\cdot A^m_{q_0})\\\cdots\\ \mathrm{Re}(e^{i2\pi q_{Q-1}m}\cdot A_{q_{Q-1}}^m) \\\mathrm{Im}(e^{i2\pi q_0m}\cdot A^m_{q_0})\\\cdots\\\mathrm{Im}(e^{i2\pi q_{Q-1}m}\cdot A_{q_Q}^m)
    \end{pmatrix}=
    \begin{pmatrix}
    \mathrm{Re}(F_{\vec{q}, m+1})\\
    \mathrm{Im}(F_{\vec{q}, m+1})
    \end{pmatrix}
    \begin{pmatrix}
        \Phi(m,m)\\\Phi(m,m-1)\\\cdots\\\Phi(m,0)
    \end{pmatrix}.
    \label{eq:expand_linear_system}
\end{equation}
This expanded linear system allows us to reduce the required minimal number of magnetic potentials to $\lceil m/2\rceil+1$. Indeed, if $\Phi$ is real, $A^m_q$ has a symmetry property: 
\begin{align}
    A^m_q = (-1)^{2m} A_q^m =  (-1)^m e^{-i\pi m}A_q^m &=(-1)^m e^{-i\pi m}e^{i2\pi qm}\sum_{k}e^{-i4\pi qk}\Phi(m,k)\\
    &= (-1)^m e^{-i2\pi (\frac{1}{2}-q)m}\sum_{k}e^{i4\pi (\frac{1}{2}-q)k}\Phi(m,k)\\
    &=(-1)^m (A^m_{\frac{1}{2}-q})^*.
\end{align}
So, for example, if we let $q_{j}=\frac{j}{2(m+1)}$, then we can get $A^m_{q_{m+1-j}}$ if we already have $A_{q_j}^m$. This symmetry property allows us to derive the inverse Fourier transform from magnetic matrix to walk profiles with only $q_0,...,q_{\lceil m/2\rceil}$:
\begin{equation}
    \Phi(m,k)=\frac{1}{m+1}\left(A_{q_0}^m+\delta_{m,\text{odd}}(-1)^k\cdot i^mA_{q_{\lceil m/2 \rceil   }}^m+2\cdot\sum_{\ell=1}^{\lceil m/2\rceil-1+\delta_{m,\mathrm{even}}}\mathrm{Re}(e^{\frac{i\pi\ell(m-2k)}{m+1}}\cdot A^m_{q_\ell})\right),
\end{equation}
Where $\delta_{m,\text{odd}}=1$ if $m$ is odd and $\delta_{m,\text{odd}}=0$ otherwise. Similar for $\delta_{m,\text{even}}$. From the above equation we can conclude that $\delta_{m,\text{odd}}(-1)^k\cdot i^mA_{q_{\lceil m/2 \rceil   }}^m$ must be zero if $m$ is odd, otherwise it contributes a complex value to real $\Phi$. Thus, whether $m$ is odd or even, the term $\delta_{m,\text{odd}}(-1)^k\cdot i^mA_{q_{\lceil m/2 \rceil}}^m$ is zero. This yields
\begin{equation}
    \Phi(m,k)=\frac{1}{m+1}\left(A_{q_0}^m+2\cdot\sum_{\ell=1}^{\lfloor m/2\rfloor}\mathrm{Re}(e^{\frac{i\pi\ell(m-2k)}{m+1}}\cdot A^m_{q_\ell})\right),
    \label{eq:mw2wp}
\end{equation}
where we use the fact $\lceil m/2\rceil-1+\delta_{m,\mathrm{even}}=\lfloor m/2\rfloor$. As a result, we only need $\lfloor m/2\rfloor +1=\lceil (m+1)/2\rceil$ many potentials.

We conclude by stating this result as a formal theorem, whose proof we have worked through during this section. 
\begin{theorem}
    Let $A$ be a real matrix. Define $\Phi(m,k)$ as in Eq.~\eqref{eq:wp_recursive}. Let $A_q=Ae^{i2\pi q}+A^{\top}e^{-i2\pi q}$ be the magnetic matrix with potential $q\in\mathbb{R}$. Then we can convert from $\Phi(m,k)$ to $A_q^m$ for any $q$ by Eq.\eqref{eq:wp2mw}, or reconstruct $\Phi(m,k)$ from $(A_{q_0}^m,...,A_{q_{\lfloor m/2\rfloor +1}}^m)$ with $q_{j}=\frac{j}{2(m+1)}$ via Eq.\eqref{eq:mw2wp}.
\end{theorem}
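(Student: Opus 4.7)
The plan is to prove the theorem in two parts: first establish the forward direction Eq.~\eqref{eq:wp2mw} by a non-commutative binomial expansion, and then derive the inverse formula Eq.~\eqref{eq:mw2wp} by recognizing the relation as a discrete Fourier transform and inverting it using a conjugate symmetry specific to real-valued walk profiles.

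For the forward direction, I would expand $A_q^m = (Ae^{i2\pi q} + A^{\top}e^{-i2\pi q})^m$ as a sum over all $2^m$ length-$m$ words in the non-commuting symbols $A$ and $A^{\top}$. Each word with exactly $k$ occurrences of $A$ carries the scalar factor $e^{i2\pi q k}\cdot e^{-i2\pi q(m-k)} = e^{i2\pi q(2k-m)}$, which depends only on $k$ and not on the order of the letters. Grouping the $\binom{m}{k}$ matrix-valued summands that share this scalar reproduces exactly the sum defining $\Phi(m,k)$; this grouping can be verified either directly from the definition or via the recursion Eq.~\eqref{eq:wp_recursive} by induction on $m$. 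Factoring out $e^{-i2\pi qm}$ then yields Eq.~\eqref{eq:wp2mw}.

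For the inverse direction, I would first note that $\Phi(m,k)$ is a real matrix because it is a sum of products of the real matrices $A$ and $A^{\top}$. Consequently $A_q^m$ is the discrete Fourier transform at frequency $2q$ of a real-valued sequence, which forces the conjugation symmetry $A_{1/2-q}^m = (-1)^m\,\overline{A_q^m}$: substituting $q \mapsto 1/2-q$ in Eq.~\eqref{eq:wp2mw} flips the sign of every exponent inside the sum and multiplies the prefactor by $e^{-i\pi m} = (-1)^m$. With the canonical nodes $q_j = j/(2(m+1))$ for $j=0,1,\ldots,m$, the matrix $\tfrac{1}{\sqrt{m+1}}F_{\vec q,m+1}$ is unitary, so the standard inverse DFT gives $\Phi(m,k) = \tfrac{1}{m+1}\sum_{j=0}^{m} e^{-i\pi j(2k-m)/(m+1)}\, A_{q_j}^m$. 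Using the symmetry to pair $A_{q_{m+1-j}}^m$ with $A_{q_j}^m$ (noting that $q_{m+1-j} = 1/2 - q_j$) collapses this sum from $m+1$ to $\lfloor m/2\rfloor + 1$ terms, each pair contributing $2\,\mathrm{Re}\!\left(e^{i\pi j(m-2k)/(m+1)}\, A_{q_j}^m\right)$.

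The main obstacle will be careful bookkeeping of the folding step, especially handling the would-be Nyquist index when $m$ is odd, namely $j = (m+1)/2$, which is self-paired under the involution $j\leftrightarrow m+1-j$. I expect this self-paired term to vanish identically: the symmetry forces it to equal $(-1)^m$ times its own conjugate, and combined with the reality of $\Phi(m,k)$ the resulting contribution cancels, as noted in the discussion following Eq.~\eqref{eq:expand_linear_system}. Verifying this cancellation cleanly, and checking that the final upper limit is $\lfloor m/2\rfloor$ regardless of the parity of $m$, is the only genuinely delicate part of the argument; the rest reduces to a routine non-commutative binomial expansion and a standard inverse DFT.
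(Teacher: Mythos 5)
Your proposal follows the paper's own route: a non-commutative binomial expansion for the forward direction, then the conjugate symmetry $A_{1/2-q}^m = (-1)^m\overline{A_q^m}$ and a folding of the length-$(m+1)$ inverse DFT for the reconstruction. Both halves are set up correctly, and you have also correctly isolated the one delicate point, namely the self-paired Nyquist index $j^*=(m+1)/2$ when $m$ is odd. However, the cancellation you then ``expect'' does not hold --- and this is a genuine gap, one the paper's own proof shares. At $j^*$ one has $q_{j^*}=1/4$ and $A_{1/4}=i(A-A^\top)$, so the unfolded inverse-DFT term is
\[
\frac{(-1)^k i^m}{m+1}\,A_{1/4}^m \;=\; \frac{(-1)^k(-1)^m}{m+1}\,(A-A^\top)^m .
\]
This matrix is real (the scalar $i^m(-1)^k$ and the matrix $A_{1/4}^m$ are each purely imaginary when $m$ is odd), so the paper's argument that it ``must be zero, otherwise it contributes a complex value to real $\Phi$'' is a non sequitur; and it is generically nonzero, since for odd $m$ the matrix $(A-A^\top)^m$ is a nonvanishing antisymmetric matrix whenever $A\neq A^\top$. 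The simplest counterexample is $m=1$: there $\Phi(1,1)=A$, while Eq.~\eqref{eq:mw2wp} (whose sum is empty) returns $\tfrac12(A+A^\top)$, and these differ for any non-symmetric $A$.

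So Eq.~\eqref{eq:mw2wp} is correct as written only for even $m$ (where the folding is clean; you can check $m=2$ directly). For odd $m$ the reconstruction must retain the Nyquist term $\frac{(-1)^k i^m}{m+1}A_{q_{(m+1)/2}}^m$, and therefore requires $\lceil m/2\rceil+1$ potentials rather than $\lfloor m/2\rfloor+1$. Your outline is essentially the paper's and correctly flags the weak spot, but the ``routine'' verification you postponed is exactly the step that fails; a correct proof must keep the extra term and adjust the potential count accordingly.
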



\subsection{Walk profile compression}

\subsubsection{Walk profile reconstruction algorithm}
We describe the complete pipeline of reconstructing walk profile from powers of magnetic matrix. Given a directed network $\mathcal{G}=(\mathcal{V},\mathcal{E})$ of node set $\mathcal{V}$ and edge set $\mathcal{E}\subseteq \mathcal{V}\times \mathcal{V}$, we first sample a subset of node set $\mathcal{V}_0$. We compute random-walk walk profiles $\widehat{\Phi}_{u,v}(m,k)$ and magnetic matrix $[\widehat{A}^m_{q_0}]_{u,v},...,[\widehat{A}_{q_{Q-1}}^m]_{u,v}$ for all $u\in\mathcal{V}_0$ and $v\in\mathcal{V}$. That is, we only evaluate the walks from source nodes $u$ in the sampled node set, to reduce the quadratic computational complexity. The calculation of walk profiles follow the recursive formula Eq.\eqref{eq:wp_recursive}:
\begin{equation}
    \forall u\in\mathcal{V}_0,v\in\mathcal{V}, \quad \widehat{\Phi}_{u,v}(m,k)=\sum_{w\in\mathcal{V}}\widehat{\Phi}_{u,w}(m-1,k-1)\widehat{A}_{w,v}+\widehat{\Phi}_{u,w}(m-1,k)\widehat{A}_{v,w}
\end{equation}
with $\widehat{A}=D^{-1}A$.
Similarly,
\begin{equation}
    \forall u\in\mathcal{V}_0,v\in\mathcal{V}, \quad [\widehat{A}^m_{q_j}]_{u,v}=\sum_{w\in \mathcal{V}}[\widehat{A}^{m-1}_{q_j}]_{u,w}[\widehat{A}_{q_j}]_{w,v}.
\end{equation}
After collecting walk profiles $\widehat{\Phi}(m,\cdot)$ and magnetic matrix $\widehat{A}_{q_0}^m,...,\widehat{A}_{q_{Q-1}}^m$, we solve the expanded linear system Eq.\eqref{eq:expand_linear_system} in real domain, by applying the pseudo inverse of the real Fourier matrix $\begin{pmatrix}
    \mathrm{Re}(F_{\vec{q}, m+1})\\
    \mathrm{Im}(F_{\vec{q}, m+1})
    \end{pmatrix}$ to magnetic matrix. This produces a solution serving as the reconstructed walk profiles denoted by $\widehat{\Phi}^{\mathrm{recon}}(m\cdot)$. Finally, we evaluate the reconstruction error by

\begin{equation}
    \mathrm{error}=\frac{1}{N\cdot N_0}\sum_{u\in\mathcal{V}_0,v\in\mathcal{V}}\frac{\norm{\widehat{\Phi}_{u,v}(m,\cdot)-\widehat{\Phi}^{\mathrm{recon}}_{u,v}(m,\cdot)}}{\max\{\norm{\widehat{\Phi}_{u,v}(m,\cdot)}, \norm{\widehat{\Phi}^{\mathrm{recon}}_{u,v}(m,\cdot)}\}},
\end{equation}
where $N=|\mathcal{V}|$ and $N_0=|\mathcal{V}_0|$.

\subsubsection{Decayed phenomenon of spectral radius}
\label{supp:decayed_spectral_radius}
Figure~\ref{fig:full_decayed_spectral_radius} demonstrates the decayed phenomenon of spectral radius of $\widehat{A}_q$ for all the network datasets we adopted. Note that ogbg-code2 datasets have no decayed phenomenon: the spectral radius of $\widehat{A}_q$ is always $1$ for any $q$. This is because these graphs are just a tree and there are no cycles. This leads to a uniform spectral density (as shown in Figures~\ref{fig:spectral_density_q} and~\ref {fig:spectral_density_q_len20}) and thus poor compressibility of walk profiles (shown in Figures~\ref{fig:recon_error_q} and~\ref{fig:recon_error_q_len20}). 
\begin{figure}[t!]
    \centering
    \includegraphics[width=1\linewidth]{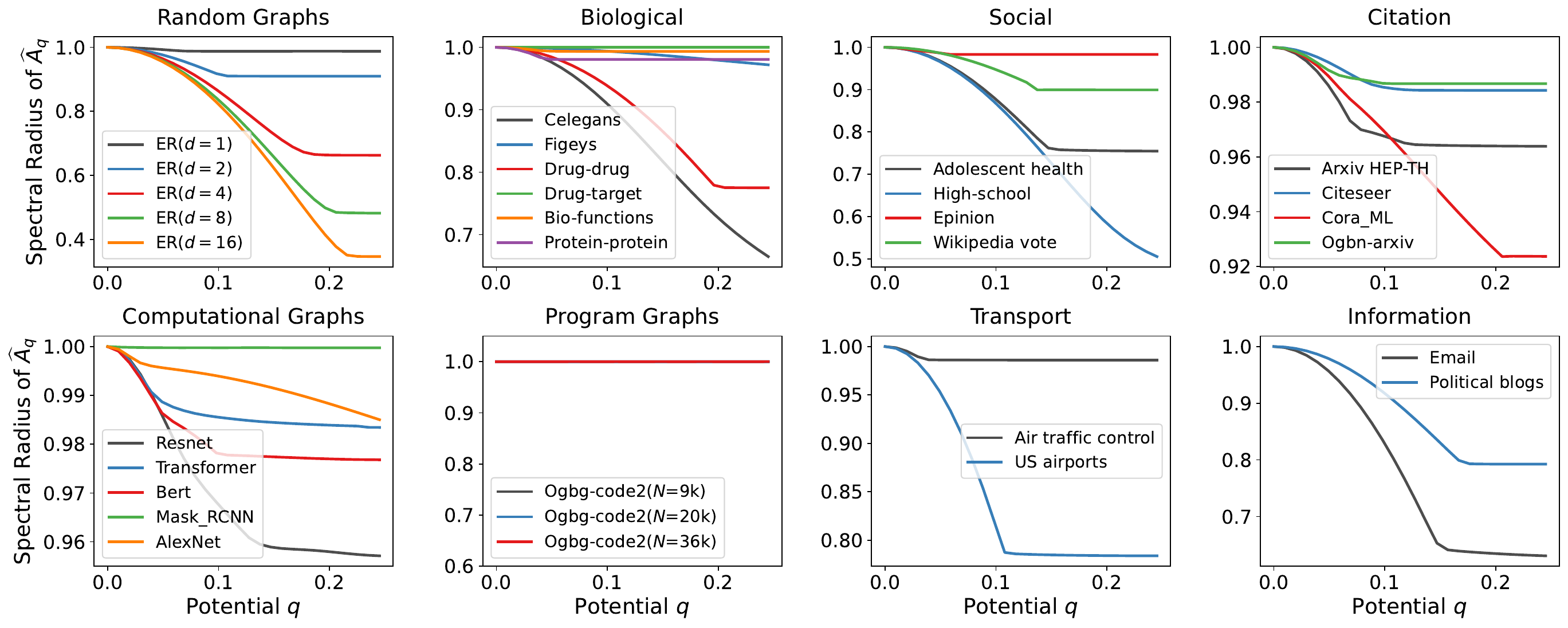}
    \caption{The decay phenomenon of spectral radius of $\widehat{A}_q$, w.r.t. the potential $q$ on networks from different domains.}
    \label{fig:full_decayed_spectral_radius}
\end{figure}

\subsubsection{Compression results of different walk length}
\label{supp:compression}
We investigated the compression of walk profiles of walk length $50$ in our main text, and here we show the results for walk length $20$ and that the conclusion is robust to the choice of walk length. Figure~\ref{fig:spectral_density_q_len20} illustrates the spectral density of $\widehat{\Phi}(20, \cdot)$ and Figure~\ref{fig:recon_error_q_len20} demonstrates the reconstruction errors. Again, we observe that if the spectral radius decays faster, then the spectral density is more concentrated on a few low-frequency components, and consequently, the reconstruction errors are overall smaller. We validate this correlation between spectral radius and walk profile compressibility of length $20$ again in Figure~\ref{fig:recon_error_spectral_radius_len20}.
\begin{figure}[t!]
    \centering
    \includegraphics[width=1\linewidth]{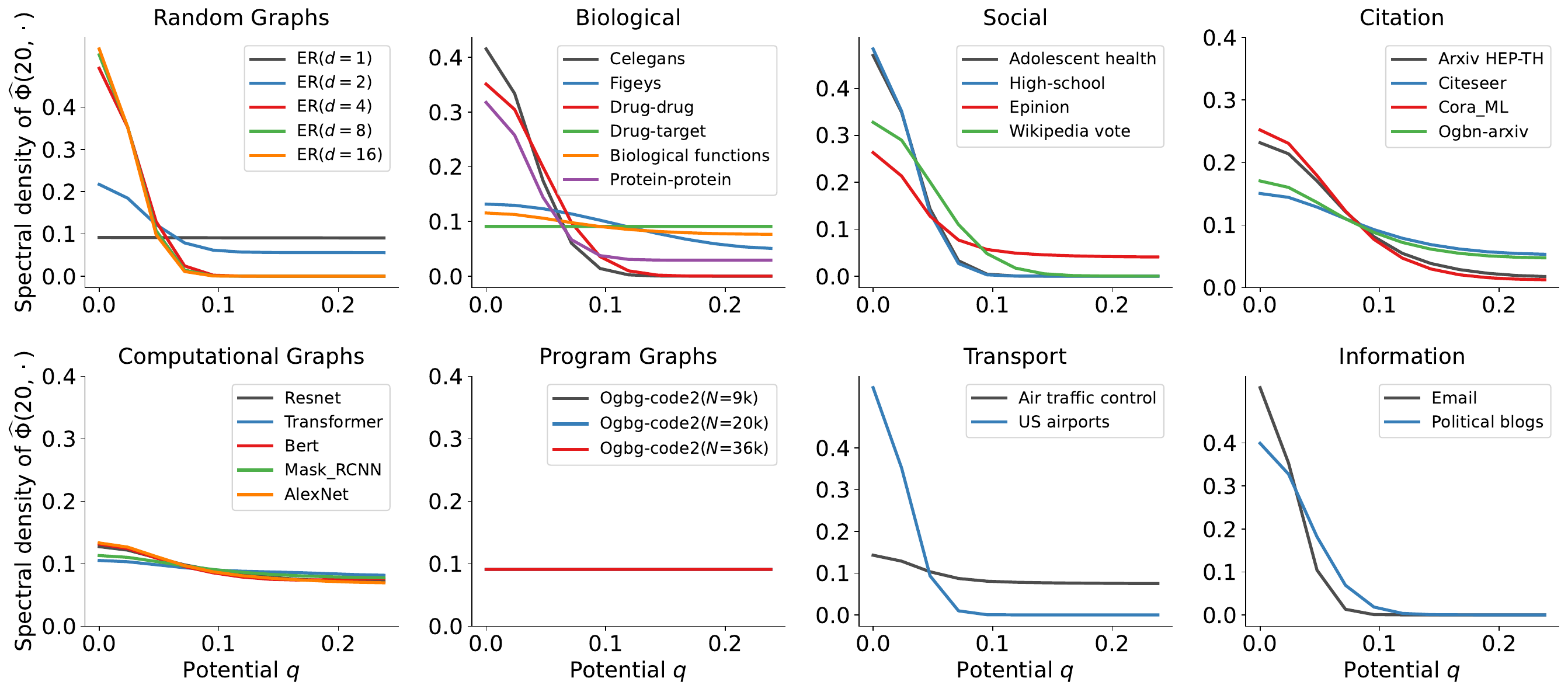}
    \caption{\textbf{Spectral density of walk profiles.} The relative spectral density of walk profiles $\Phi(20, \cdot)$, i.e., $\langle |A_q^{20}|^2\rangle$ with respect to potential $q$. The average $\langle \cdot\rangle$ is taken over all node pairs in the network.}
    \label{fig:spectral_density_q_len20}
\end{figure}
\begin{figure}[t!]
    \centering
    \includegraphics[width=1\linewidth]{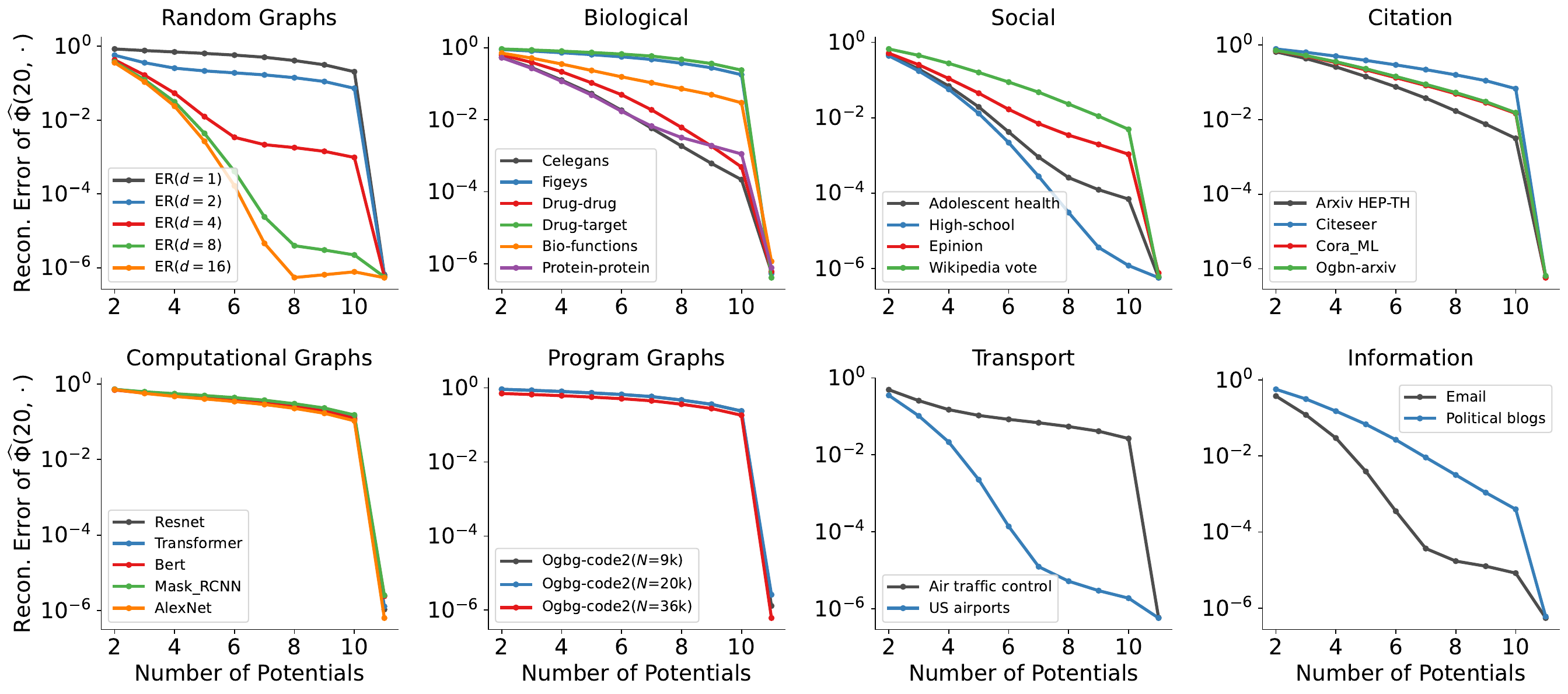}
    \caption{\textbf{Compressibility of walk profiles.} The average relative reconstruction error of walk profiles $\widehat{\Phi}(20, \cdot)$ computed using $[\widehat{A}_{q_0}^{20}, ..., \widehat{A}^{20}_{q_{Q-1}}]$ with varying numbers of the top-$Q$ smallest potentials, on a wide range of directed networks from different domains.}
    \label{fig:recon_error_q_len20}
\end{figure}
\begin{figure}
    \centering
    \includegraphics[width=0.6\linewidth]{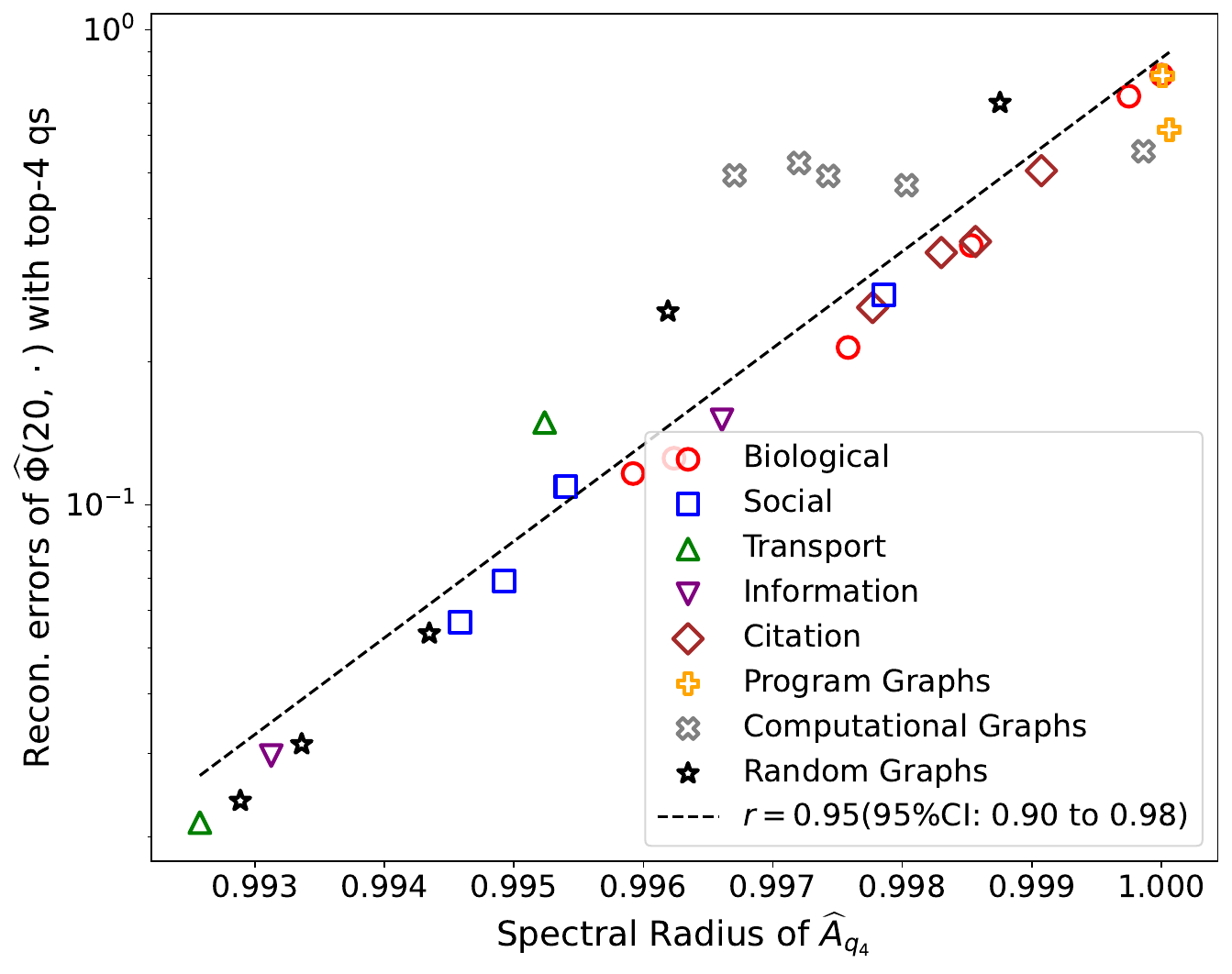}
    \caption{The decay rate of spectral radius (measured by spectral radius of $\widehat{A}_{q_{4}}$ with $q_{4}=\frac{4}{2(20+1)}=2/21$) is strongly correlated with the compressibility of walk profiles $\widehat{\Phi}(20, \cdot)$. The Pearson correlation coefficient is $r=0.95$ with 95\% confidence interval $(0.90, 0.98)$, estimated by Fisher transformation.}
    \label{fig:recon_error_spectral_radius_len20}
\end{figure}

\subsubsection{Hard examples for compression} 
\label{supp:hard_examples}
\begin{figure}[t!]
    \centering
    \includegraphics[width=0.96\linewidth]{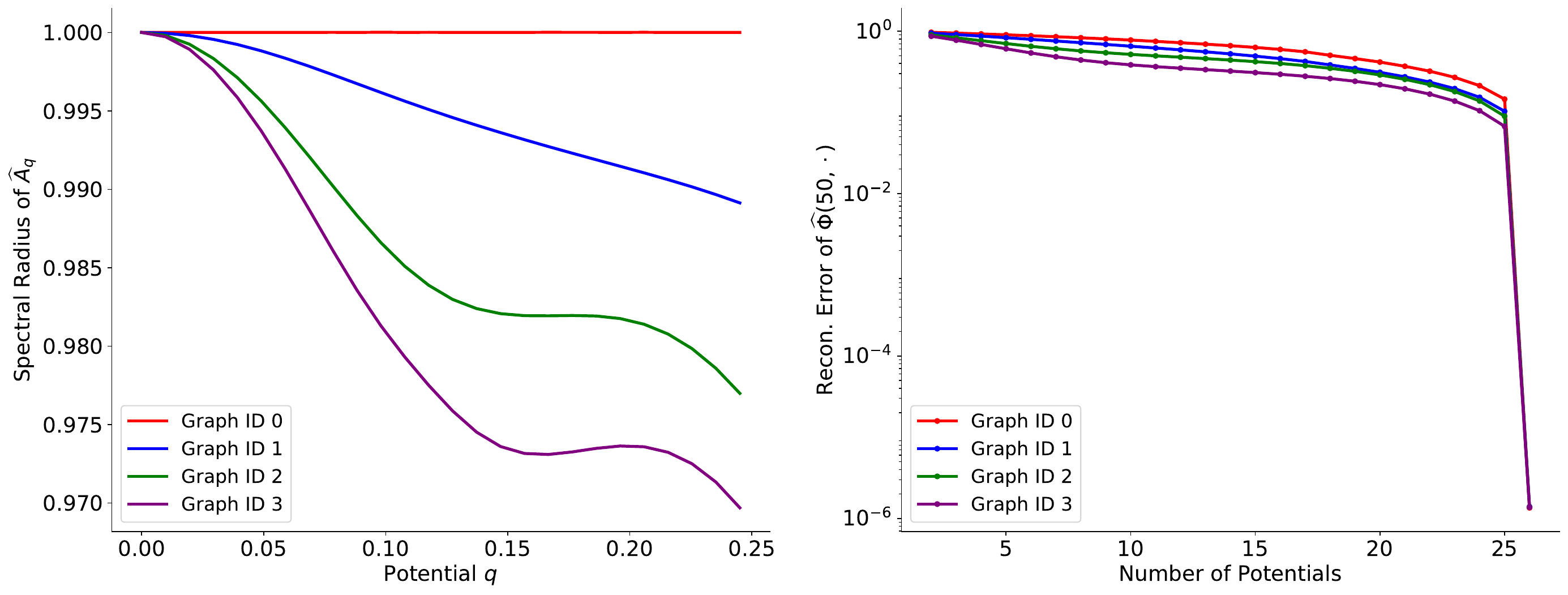}
    \caption{\textbf{A sequence of networks with decreasing spectral radius and increasing compressibility.} Left: the spectral radius of $\widehat{A}_q$ is gradually decreasing when we sequentially add more edges. Right: the compressibility becomes better for the same sequence of networks.}
    \label{fig:growing_graph}
\end{figure}
The connection between compressibility of walk profiles and spectral radius of the magnetic matrix allows us to understand how the graph structure affects the compression. Particularly, as the smallest eigenvalue of magnetic normalized Laplacian $I-D^{-1/2}A_qD^{-1/2}$ for a tree (directed graphs without cycles) is always $0$ regardless of $q$, the spectral radius of $\hat{A}_q=D^{-1}A_q$ is always $1$. Therefore, we expect trees to be the most difficult examples for walk profile compression. 

To further validate this, we first start with a directed tree graph from the code-2 dataset and sequentially add edges to make it gradually not tree-like. Let $\widehat{A}^{(t)}_{q_j}$ denote the random-walk magnetic matrix at step $t$ of the edge addition process, with potential $q_j=\frac{j}{2(m+1)}$ (we choose $m=50$). At each step $t$, we adapt~\cite{ghosh2006growing} and choose an edge to be added based on a heuristic that is expected to decrease the largest eigenvalue of $A_{q_{10}}^{(t)}$. We iteratively add edges following this heuristic until the average change of overall spectral radius $\frac{1}{\lfloor m/2\rfloor +1}\sum_{j=0}^{\lfloor m/2\rfloor +1}\rho(\widehat{A}^{(t)}_{q_j})-\rho(\widehat{A}^{(t+1)}_{q_j})$ is greater than a threshold $\tau=0.05$. At this point, we save a checkpoint for the perturbed graph, and we continue to add edges again. We stop adding edges until we obtain 3 checkpoints, which we label as graphs with IDs 1 through 3. The initial tree graph is referred to as ID 0. Figure \ref{fig:growing_graph} shows the reconstruction error on the these checkpointed graphs. We can see that compression is hardest for the tree graph in the beginning, but the compression becomes increasingly easier as the spectral radius gets decreases when we gradually add new edges. 

\subsection{Motif detection analysis}
\label{supp:motif}
In this section, we provide technical discussions about the capability of walk profiles for motif detection. Formally, given a directed graph $\mathcal{G}=(\mathcal{V},\mathcal{E})$ with node set $\mathcal{V}$ and edge set $\mathcal{E}\subseteq \mathcal{V}\times \mathcal{V}$, a target motif is a subgraph $\mathcal{G}_0=(\mathcal{V}_0, \mathcal{E}_0)$ where $\mathcal{V}_0\subseteq \mathcal{V}$ and $\mathcal{E}_0\subseteq \mathcal{E}$ contains nodes from $\mathcal{V}_0$ exclusively. The detection of target motif $\mathcal{G}_0$ is a function $f_{\mathcal{G}_0}:\mathcal{V}\to \{0,1\}$ defined by 
\begin{equation}
    f_{\mathcal{G}_0}(u)=1 ,\quad \text{if } u\in\mathcal{V}_0,
\end{equation}
and $f_{\mathcal{G}_0}(u)=0$ otherwise.

We focus on detecting cycles by walk profiles. Specifically, we extract $\Phi_{u,u}(m,\cdot)$ to express $f_{\mathcal{G}_0}(u)$, and $\mathcal{G}_0$ is a cycle with forward edges and backward edges. We adopt the same notation in the main text and represent a cycle by $C_{b_1b_2\cdots b_m}$, e.g., $C_{1011}$ represents a cycle $u_1\to u_2\leftarrow u_3\to u_4\to u_1$. Generally, it is hard to detect motifs by walks, since the walk sequence could involve repeated nodes. However, even assuming the walk sequence does not involve any repeated nodes, we can show that walk profiles cannot detect cycles with equal or more than two backward edges. 
\begin{theorem}
     Suppose $C_{b_1b_2\cdots b_m}$ is a cycle such that either at least two of $b_i$ is $0$ or at least two of $b_i$ is $1$. Let us define walk profile $\Phi_{u,u}(m,k)$ be the count of $m$-step bidirectional walks of $k$ forward edges, but without counting any walk sequences that involve repeated nodes. Then there exists two non-isomorphic graphs $\mathcal{G},\mathcal{G}^{\prime}$, and nodes $u\in\mathcal{V},u^{\prime}\in\mathcal{V}^{\prime}$, such that $f_{C_{b_1b_2\cdots b_m}}(u)=1$, $f^{\prime}_{C_{b_1b_2\cdots b_m}}(u^{\prime})=0$ but $\Phi_{u,u}(m,\cdot)=\Phi^{\prime}_{u^{\prime},u^{\prime}}(m,\cdot)$.
\end{theorem}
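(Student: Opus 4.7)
The plan is to construct two non-isomorphic directed graphs whose repetition-free walk profiles at chosen root vertices coincide. The guiding principle is that $\Phi_{u,u}(m,\cdot)$ records only the multiset of forward-edge \emph{counts} achievable by closed walks at $u$, not the cyclic \emph{arrangement} of edge orientations along those walks.

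First I would analyze closed walks of length $m$ at a vertex $u$ of an isolated cycle. Any such walk that revisits no node (except for closing back at $u$) must traverse every vertex, hence equals either the clockwise or counter-clockwise full tour. Setting $\mathcal{G} = C_{b_1\cdots b_m}$, $u = u_1$, and $k = \sum_i b_i$, the clockwise tour uses $k$ forward edges and the reverse tour uses $m-k$, so they contribute exactly to $\Phi_{u,u}(m,k)$ and $\Phi_{u,u}(m,m-k)$ respectively, with all other coordinates of $\Phi_{u,u}(m,\cdot)$ equal to $0$. The resulting vector therefore depends only on $m$ and $k$, and not on the specific sequence $(b_1,\ldots,b_m)$.

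Next I would exhibit a second sequence $(b'_1,\ldots,b'_m)$ with $\sum_i b'_i = k$ such that $C_{b'_1\cdots b'_m}$ is non-isomorphic to $C_{b_1\cdots b_m}$ as a directed graph. Taking $\mathcal{G}' = C_{b'_1\cdots b'_m}$ on a disjoint vertex set and $u' \in \mathcal{V}(\mathcal{G}')$ arbitrary, the preceding analysis gives $\Phi'_{u',u'}(m,\cdot) = \Phi_{u,u}(m,\cdot)$. At the same time, $f_{C_{b_1\cdots b_m}}(u) = 1$ while $f'_{C_{b_1\cdots b_m}}(u') = 0$, because the only cycle of $\mathcal{G}'$ is the non-isomorphic $C_{b'_1\cdots b'_m}$.

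The main obstacle is producing such a $(b'_1,\ldots,b'_m)$ under the theorem's hypothesis. Two length-$m$ binary strings of equal Hamming weight yield isomorphic directed cycles exactly when related by the dihedral action consisting of cyclic rotations together with reversal combined with bit-flip, whose orbits have size at most $2m$. In the non-trivial regime where both $0$ and $1$ each appear at least twice (forcing $m \geq 4$ and $2 \leq k \leq m-2$), I would verify directly that the weight-$k$ strings split into at least two dihedral orbits. The minimal case $m=4$, $k=2$ is handled by $C_{1100}$ versus $C_{1010}$, whose degree sequences differ, and a canonical ``clustered versus alternating'' placement of the minority symbol handles all remaining admissible $(m,k)$. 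Combining this existence claim with the walk-profile invariance established above yields the required pair $(\mathcal{G},u)$ and $(\mathcal{G}',u')$, completing the proof.
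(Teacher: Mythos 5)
Your proof is correct and follows the same overall strategy as the paper's: take $\mathcal{G}$ to be the cycle $C_{b_1\cdots b_m}$ itself, observe that an isolated $m$-cycle admits exactly two repetition-free closed $m$-walks at any vertex (the two full tours, with $k$ and $m-k$ forward edges respectively), so $\Phi_{u,u}(m,\cdot)$ depends only on the unordered pair $\{k,m-k\}$, and then exhibit a second non-isomorphic directed $m$-cycle with the same forward-edge count. Where you improve on the paper's argument is in the isomorphism bookkeeping. The paper simply requires that the permuted string ``not be a cyclic permutation'' of the original, which neither accounts for the reflection symmetry (reversal combined with bit-flip also yields an isomorphic cycle) nor establishes that such a permuted, non-isomorphic cycle exists under the hypothesis. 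You correctly identify the full dihedral action and anchor the existence claim with the concrete base case $C_{1100}$ vs.\ $C_{1010}$, distinguished by degree sequence. The one place you are still informal is the assertion that a ``clustered vs.\ alternating'' placement always yields two distinct dihedral orbits whenever $2\le k\le m-2$; a clean way to finish this is to observe that the number of maximal runs of $1$'s in the cyclic bit-string equals the number of out-degree-$2$ vertices, an isomorphism invariant, and that $1^k0^{m-k}$ has one such run while $1^{k-1}0\,1\,0^{m-k-1}$ has two. With that addendum, your proof is complete and in fact more rigorous than the published one.
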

\begin{proof}
     Let $\mathcal{G}=C_{b_1b_2\cdots b_m}$ and $\mathcal{G}^{\prime}=C_{\sigma(b_1b_2\cdots b_m)}$, where $\sigma(b_1b_2\cdots b_m)$ is a permutation of $b_1b_2\cdots b_m$. We force this permutation to not be a cyclic permutation. This implies that $f_{C_{b_1b_2\cdots b_m}}(u)=1$ but $f^{\prime}_{C_{b_1b_2\cdots b_m}}(u^{\prime})=0$ for any nodes $u,u^{\prime}$. Now on $\mathcal{G}$, we have $\Phi_{u,u}(m,\sum_{i=1}^mb_i)=\Phi_{u,u}(m,m-\sum_{i=1}^mb_i)=1$ and $\Phi_{u,u}(m,k)=0$ for other $k$. This is the same for $\Phi^{\prime}_{u,u}(m,\cdot)$ on $\mathcal{G}^{\prime}$.
\end{proof}
On the other hand, the following result states that walk profiles without counting repeated nodes can detect cycles without backward edge (i.e., a fully-oriented directed cycle) or with one backward edge (e.g., feed-forward loops).
\begin{theorem}
     Suppose $C_{b_1b_2\cdots b_m}$ is a cycle such that either at most one $b_i$ is $0$ or at most one $b_i$ is $1$. Let us define walk profile $\Phi_{u,u}(m,k)$ be the count of $m$-step bidirectional walks of $k$ forward edges, but without counting any walk sequences that involve repeated nodes. Then $\Phi_{u,u}(m,\sum_{i=1}^mb_i)=f_{C_{b_1b_2\cdots b_m}}(u)$.
\end{theorem}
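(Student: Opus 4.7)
The plan is to interpret $\Phi_{u,u}(m, k)$ with $k = \sum_{i=1}^m b_i$ as a direct count of $m$-step closed bidirectional walks from $u$ to $u$ that have no repeated nodes and exactly $k$ forward edges, and then to show this count is precisely $0$ or $1$ according to $f_{C_{b_1\cdots b_m}}(u)$ by an enumeration argument rather than any algebraic manipulation. The opening structural fact I would establish is that any such walk $u = w_0, w_1, \ldots, w_m = u$ visits $m$ distinct vertices and therefore traces out an embedded length-$m$ cycle through $u$ in the underlying graph; conversely, every length-$m$ cycle through $u$ gives rise to exactly two such walks, one for each traversal orientation.

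Given that bijection, I would next compute how the two orientations of a single length-$m$ cycle $\mathcal{C}$ through $u$ distribute across the bins of $\Phi_{u,u}(m, \cdot)$. If $\mathcal{C}$, read in one direction starting at $u$, has edge-orientation pattern $(b'_1,\ldots,b'_m)$, then that orientation contributes $1$ to the bin $\sum_i b'_i$, while the reverse traversal flips every edge's role (forward becomes backward) and contributes $1$ to the bin $m - \sum_i b'_i$. Thus $\mathcal{C}$ contributes $2$ to a single bin $\Phi_{u,u}(m, k)$ only when $k = m - k$; otherwise it contributes at most $1$.

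The hypothesis now enters: ``at most one $b_i$ is $0$ or at most one $b_i$ is $1$'' forces $k \in \{0, 1, m-1, m\}$, and a short case check shows $k \neq m - k$ for all $m \geq 3$. Consequently no cycle is double-counted at bin $k$. In the setting of the theorem, the ambient graph is $\mathcal{G} = C_{b_1\cdots b_m}$ itself, so each node $u$ lies on a unique length-$m$ cycle, whose forward-orientation pattern starting at $u$ is a cyclic rotation of $(b_1,\ldots,b_m)$ and therefore has forward count $k$. Hence $\Phi_{u,u}(m, k) = 1 = f(u)$ for $u$ on the cycle, and $\Phi_{u,u}(m, k) = 0 = f(u)$ for any $u$ not on it.

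The main obstacle is the separation argument in the previous paragraph: one must preclude a single cycle contributing $2$ to $\Phi_{u,u}(m, k)$, and the hypothesis on the $b_i$ is exactly what makes $k$ and $m-k$ land in different bins. A secondary point, worth handling carefully, is matching a walk count to the $\{0,1\}$-valued indicator $f$ rather than to a count of isomorphic cycles through $u$; this relies on the uniqueness of the embedded length-$m$ cycle through $u$, which is automatic under the detection setup where the ambient graph coincides with the target motif $\mathcal{G}_0$. Once these two points are addressed the remaining steps are routine combinatorial bookkeeping.
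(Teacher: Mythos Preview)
Your proposal rests on the reading that the ambient graph is $C_{b_1\cdots b_m}$ itself. Under that reading the claim is nearly vacuous—every vertex lies on the cycle, so $f\equiv 1$, and the last clause ``$\Phi_{u,u}(m,k)=0=f(u)$ for any $u$ not on it'' is empty. The paper instead intends the result for a general host graph $\mathcal{G}$, and its proof (see the case $\sum_i b_i\in\{1,m-1\}$) actually establishes the equivalence $\Phi_{u,u}(m,k)>0 \iff f_{C_{b_1\cdots b_m}}(u)=1$; the ``$=$'' in the statement is to be read as ``determines'', since in a general $\mathcal{G}$ several distinct length-$m$ cycles through $u$ can make $\Phi>1$ while $f$ remains $\{0,1\}$-valued. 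The theorem statement is admittedly imprecise on this point, so your reading is understandable, but it misses the content of the result.

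The ingredient you are missing for the general case is exactly the one the paper isolates: a closed, non-repeating $m$-walk from $u$ with $k$ forward steps traces out \emph{some} length-$m$ cycle whose orientation string, read from $u$, has $k$ ones; the hypothesis $k\in\{0,1,m-1,m\}$ is precisely what forces every length-$m$ binary string with $k$ ones to be a cyclic rotation of $b_1\cdots b_m$, so that cycle is isomorphic to $C_{b_1\cdots b_m}$ and hence $f(u)=1$. Your bin analysis (each embedded cycle contributes once to bin $j$ and once to bin $m-j$) is correct and recovers the symmetry $\Phi_{u,u}(m,k)=\Phi_{u,u}(m,m-k)$, but the ``no double-counting'' step—aimed at forcing $\Phi\le 1$—is neither needed nor available once the host graph may contain multiple length-$m$ cycles through $u$.
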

\begin{proof}
   Suppose $\sum_{i=1}b_i=0$ or $\sum_{i=1}^mb_i=m$, i.e., $C_{b_1b_2\cdots b_m}$ is a fully-oriented cycle. Then  by definition $\Phi_{u,u}(m,0)=\Phi_{u,u}(m,m)=f_{C_{b_1b_2\cdots b_m}}(u)$. Suppose $\sum_{i=1}b_i=1$ or $\sum_{i=1}^mb_i=m-1$, we can see that $f_{C_{b_1b_2\cdots b_m}}(u)=1\implies \Phi_{u,u}(m,1)=\Phi_{u,u}(m,m-1)>0$, and conversely $\Phi_{u,u}(m,1)=\Phi_{u,u}(m,m-1)>0\implies f_{C_{\sigma(b_1b_2\cdots b_m)}}(u)=1$, where $\sigma(b_1b_2\cdots b_m)$ is a permutation of $b_1b_2\cdots b_m$. Importantly, if either one $b_i$ is zero or one, $\sigma(b_1b_2\cdots b_m)$ is simply a cyclic permutation of $b_1b_2\cdots b_m$. This means
    $f_{C_{b_1b_2\cdots b_m}}(u)=f_{C_{b_1b_2\cdots b_m}}(u)$, due to the cyclic permutation symmetry of cycles. Thus, $\Phi_{u,u}(m,\sum_{i=1}^mb_i)=f_{C_{b_1b_2\cdots b_m}}(u)$ in this two cases.
\end{proof}

\putbib
\end{bibunit}
\end{document}